	\setlist[1]{leftmargin=\parindent}	
\newcommand\black{\color{black}}
\colorlet{0}{black}
\colorlet{1}{red!80!black}
\colorlet{2}{blue!80!black}
\colorlet{3}{orange!90!black}
\colorlet{4}{green!50!black}
\colorlet{5}{violet!80!black}
\colorlet{6}{teal!90!white}
\colorlet{7}{brown!90!black}
\colorlet{8}{pink!70!blue}
\colorlet{9}{lime!80!black}
\colorlet{term}{1}
\colorlet{type}{2}
\colorlet{ctxt}{3}
\colorlet{simp}{4}
\colorlet{patn}{5}
\colorlet{ptpe}{6}
\newcommand\colorterm{\color{term}}
\newcommand\colortype{\color{type}}
\newcommand\colorctxt{\color{ctxt}}
\newcommand\colorsimp{\color{simp}}
\newcommand\colorpatn{\color{patn}}
\newcommand\colorptpe{\color{ptpe}}
\newcommand\colored{%
  \let\color@term\colorterm%
  \let\color@type\colortype%
  \let\color@ctxt\colorctxt%
  \let\color@simp\colorsimp%
  \let\color@patn\colorpatn%
  \let\color@ptpe\colorptpe%
}
\newcommand\uncolored{%
  \let\color@term\relax%
  \let\color@type\relax%
  \let\color@ctxt\relax%
  \let\color@simp\relax%
  \let\color@patn\relax%
  \let\color@ptpe\relax%
  \black
}
\newcommand\termcolor[1]{{\color{term}#1}}
\newcommand\vc[1]{\vcenter{\hbox{$#1$}}}
\newcommand\rvecup[1]{\accentset{\rightharpoonup}{#1}}
\newcommand\lvecup[1]{\accentset{\leftharpoonup}{#1}}
\newcommand\define[1]{\emph{#1}}
\newcommand\defeq{\stackrel{\scriptscriptstyle\Delta}=}
\newcommand\smallbin[1]{\mathchoice
      {\mathbin{\raise.2ex \hbox{$\scriptstyle      #1$}}}%
      {\mathbin{\raise.2ex \hbox{$\scriptstyle      #1$}}}%
      {\mathbin{\raise.12ex\hbox{$\scriptscriptstyle#1$}}}%
      {\mathbin{           \hbox{$\scriptscriptstyle#1$}}}}%
\newcommand\Con{\wedge}
\newcommand\Imp{\rightarrow}
\newcommand\Tim{\times}
\newcommand\Ten{\otimes}
\newcommand\con{\kern1pt{\smallbin\Con}\kern1pt}
\newcommand\imp{\kern1pt{\smallbin\Imp}\kern1pt}
\newcommand\tim{\kern1pt{\smallbin\Tim}\kern1pt}
\newcommand\ten{\kern1pt{\smallbin\Ten}\kern1pt}
\newcommand\arrimp{\kern1pt{\smallbin\rightsquigarrow}\kern1pt}
\newlength{\ilength}
\newcommand\From{\smallbin\Leftarrow}
\newcommand\fv[1]{\mathsf{fv}(#1)}
\newcommand\bv[1]{\mathsf{bv}(#1)}
\newcommand\cbn{{\mathsf{cbn}}}
\newcommand\cbv{{\mathsf{cbv}}}
\newcommand\cbpv{{\mathsf{cbpv}}}
\newcommand\uterm[1]{\underline{\term{#1}}}
\newcommand\utrm[1]{\underline{\trm{#1}}}
\newcommand\seq@discard[1]{\seq@start}
\newcommand\seq@comma{\@ifnextchar,{{\black,\dots,}~\seq@discard}{{\black,}~\seq@start}}
\newcommand\vec@space{\kern0.4pt}
\newcommand\type[1]{{\colored\typ{#1}}}
\newcommand\typ[1]{%
  \vphantom)%
  \let\seq@start\type@start%
  \seq@start#1\@end%
}
\newcommand\type@start{\color@type\let\type@loop\type@next\type@loop}
\newcommand\type@next[1]{%
  \ifx#1\@end\let\type@loop\type@end\else%
  \ifx#1`\let\type@loop\type@escape\else%
  \ifx#1_\let\type@loop\type@sub\else%
  \ifx#1^\let\type@loop\type@sup\else%
  \ifx#1!\let\type@loop\type@rvec\else%
  \ifx#1?\let\type@loop\type@lvec\else%
  \ifx#1|\let\type@loop\type@bar\else%
  \ifx#1,\let\type@loop\seq@comma\else%
  \ifx#1>{\kern1pt\smallbin\Rightarrow\kern1pt}\else%
  \ifx#1.\kern1pt{\cdot}\kern1pt\else%
  \ifx#1A\!_A\else%
  \typevar{#1}%
  \fi\fi\fi\fi\fi\fi%
  \fi\fi\fi\fi\fi%
  \type@loop%
}
\newcommand\type@sub[1]{_{#1}\type@start}
\newcommand\type@sup[1]{^{{\type@sup@color #1}}\type@start}
\newcommand\type@rvec[1]{\vec@space\rvecup{\typevar{#1}\vec@space}\type@start}
\newcommand\type@lvec[1]{\vec@space\lvecup{\typevar{#1}\vec@space}\type@start}
\newcommand\type@bar{\@ifnextchar-{~{\black\vdash}~\seq@discard}{|\type@start}}
\newcommand\type@escape[1]{#1\type@start}
\newcommand\type@end{\uncolored}
\newcommand\typevar[1]{\typevar@start#1\relax}
\newcommand\typevar@start{\let\typevar@loop\typevar@next\typevar@loop}
\newcommand\typevar@next[1]{%
  \ifx#1\relax\let\typevar@loop\typevar@end\else%
  \ifx#1k\kappa\else%
  \ifx#1r\rho\else%
  \ifx#1s\sigma\else%
  \ifx#1t\tau\else%
  \ifx#1u\upsilon\else%
  \ifx#1w\omega\else%
  #1%
  \fi\fi\fi\fi\fi\fi\fi%
  \typevar@loop%
}
\newcommand\typevar@end{}
\newcounter{@parens}
\newcommand\term[1]{{\colored\trm{#1}}}
\newcommand\trm[1]{%
  \setcounter{@parens}{0}%
  \vphantom(%
  \let\seq@start\term@start%
  \seq@start#1\@end%
}
\newcommand\term@start{\color@term\let\term@loop=\term@next\term@loop}
\newcommand\term@next[1]{%
  \ifx#1\@end\let\term@loop\term@end\else%
  \ifx#1`\let\term@loop\term@escape\else%
  \ifx#1"\let\term@loop\term@color\else%
  \ifx#1:\let\term@loop\term@colon\else%
  \ifx#1|\let\term@loop\term@bar\else%
  \ifx#1((\stepcounter{@parens}\else%
  \ifx#1))\addtocounter{@parens}{-1}\else%
  \ifx#1\{\{\stepcounter{@parens}\else%
  \ifx#1\}\}\addtocounter{@parens}{-1}\else%
  \ifx#1,\ifnum\value{@parens}=0\let\term@loop\seq@comma\else,\fi\else%
  \ifx#1_\let\term@loop\term@sub\else%
  \ifx#1^\let\term@loop\term@sup\else%
  \ifx#1?\let\term@loop\term@lvec\else%
  \ifx#1!\let\term@loop\term@rvec\else%
  \ifx#1*\star\else%
  \ifx#1l\lambda\else%
  \ifx#1<\langle\else%
  \ifx#1>\rangle\else%
  \ifx#1..\,\else%
  \ifx#1;\,{;}\,\else%
  \ifx#1=\kern1pt{\smallbin=}\kern1pt\else%
  \ifx#1G{{\black\Gamma}}\else%
  \ifx#1D{{\black\Delta}}\else%
  \ifx#1\sim{\,\black\sim\,}\else%
  #1%
  \fi\fi\fi\fi\fi%
  \fi\fi\fi\fi\fi%
  \fi\fi\fi\fi%
  \fi\fi\fi\fi\fi%
  \fi\fi\fi\fi\fi%
  \term@loop%
}
\newcommand\term@end{\color{black}\uncolored}
\newcommand\term@discard[1]{\term@start}
\newcommand\term@escape[1]{#1\term@start}
\newcommand\term@color[1]{\color{#1}\term@start}
\newcommand\term@sub[1]{_{#1}\term@start}
\newcommand\term@sup[1]{^{#1}\term@start}
\newcommand\term@rvec[1]{\rvecup{#1}\term@start}
\newcommand\term@lvec[1]{\lvecup{#1}\term@start}
\newcommand\term@bar{\@ifnextchar-{~{\black\vdash}~\term@discard}{|\term@start}}
\newcommand\term@colon{\@ifnextchar={\term@assign}{\term@type}}
\newcommand\term@assign{\mathbin{{:}{=}}\term@discard}
\newcommand\term@type{\black\colon\type@start}
\newcommand\cat[1]{
  \cat@start#1\@end%
}
\newcommand\cat@start{\let\cat@loop\cat@next\cat@loop}
\newcommand\cat@next[1]{%
  \ifx#1\@end\let\cat@loop\cat@end\else%
  \ifx#1_\let\cat@loop\cat@sub\else%
  \ifx#1^\let\cat@loop\cat@sup\else%
  \ifx#1?\let\cat@loop\cat@lvec\else%
  \ifx#1-\let\cat@loop\cat@arrows\else%
  \ifx#1*\tim\else%
  \ifx#1A\!_A\else%
  \typevar{#1}%
  \fi\fi\fi\fi\fi
  \fi\fi%
  \cat@loop%
}
\newcommand\cat@sub[1]{_{#1}\cat@start}
\newcommand\cat@sup[1]{^{#1}\cat@start}
\newcommand\cat@rvec[1]{\rvecup{\typevar{#1}}\cat@start}
\newcommand\cat@lvec[1]{\lvecup{\typevar{#1}}\cat@start}
\newcommand\cat@arrows{%
  \@ifnextchar-{\longrightarrow\cat@discard@two}{%
  \@ifnextchar>{\imp\cat@discard}{-\cat@start}}}
\newcommand\cat@end{}
\newcommand\cat@discard[1]{\cat@start}
\newcommand\cat@discard@two[2]{\cat@start}
\newcommand\rr[1]{%
  \ifx#1x\mathsf{var}\else%
  \ifx#1l\mathsf{abs}\else%
  \ifx#1a\mathsf{arg}\else%
  \ifx#1f\mathsf{fun}\else%
  \fi\fi\fi\fi%
}
\newcommand\get{\mathsf{get}~}
\newcommand\set{\mathsf{set}~}
\newcommand\print{\mathsf{print}}
\newcommand\rand{\mathsf{rand}}
\newcommand\rd{\mathsf{read}}
\newcommand\wrt{\mathsf{write}~}
\newcommand\ifthen{\mathsf{if}}
\newcommand\Z{\mathbb Z}
\newcommand\plus{+}
\newlength\diamondsize
\newcommand\@diamond[2]{\tikz\draw[#2](-#1,0)--(0,#1)--(#1,0pt)--(0,-#1)--cycle;}
\newcommand\@ldiamond[2]{\tikz\draw[#2](-#1,0)--(0,#1)--(0,-#1)--cycle;}
\newcommand\@rdiamond[2]{\tikz\draw[#2](0,#1)--(#1,0pt)--(0,-#1)--cycle;}
\newcommand\filldiamond{\mathchoice%
  {\@diamond{2.0pt}{fill}}%
  {\@diamond{2.0pt}{fill}}%
  {\@diamond{1.5pt}{fill}}%
  {\@diamond{1.2pt}{fill}}}
\newcommand\leftfilltriangle{\mathchoice%
  {\@ldiamond{2.0pt}{fill}}%
  {\@ldiamond{2.0pt}{fill}}%
  {\@ldiamond{1.5pt}{fill}}%
  {\@ldiamond{1.2pt}{fill}}}
\newcommand\rightfilltriangle{\mathchoice%
  {\@rdiamond{2.0pt}{fill}}%
  {\@rdiamond{2.0pt}{fill}}%
  {\@rdiamond{1.5pt}{fill}}%
  {\@rdiamond{1.2pt}{fill}}}
\newcommand\opendiamond{\mathchoice%
  {\@diamond{2.0pt}{}}%
  {\@diamond{2.0pt}{}}%
  {\@diamond{1.5pt}{}}%
  {\@diamond{1.2pt}{}}}
\newcommand\leftopentriangle{\mathchoice%
  {\@ldiamond{2.0pt}{}}%
  {\@ldiamond{2.0pt}{}}%
  {\@ldiamond{1.5pt}{}}%
  {\@ldiamond{1.2pt}{}}}
\newcommand\rightopentriangle{\mathchoice%
  {\@rdiamond{2.0pt}{}}%
  {\@rdiamond{2.0pt}{}}%
  {\@rdiamond{1.5pt}{}}%
  {\@rdiamond{1.2pt}{}}}
\newcommand\MXL{\leftfilltriangle}
\newcommand\MXR{\rightfilltriangle}
\newcommand\MOL{\leftopentriangle}
\newcommand\MOR{\rightopentriangle}
\newcommand\MX{\filldiamond}
\newcommand\MO{\opendiamond}
\newcommand\llb\llbracket
\newcommand\rrb\rrbracket
\newcommand\llp{\llparenthesis\kern1pt}
\newcommand\rrp{\kern1pt\rrparenthesis}
\renewcommand\e{\varepsilon}
\newcommand\loc[1]{\mathsf{loc}(\term{#1})}
\newcommand\rnd{\mathsf{rnd}} 
\newcommand\nd {\mathsf{nd}}  
\newcommand\inp{\mathsf{in}}  
\newcommand\out{\mathsf{out}} 
\newcommand\FMC{\textnormal{FMC}}
\tikzstyle{rwhead}=[>/.tip={Triangle[open,length=2.5pt,width=4.5pt]},|/.tip={Rectangle[length=.5pt,width=4.5pt]}]
\tikzstyle{rw} =[line width=.5pt,rwhead,->]
\tikzstyle{rws}=[line width=.5pt,rwhead,->.>]
\tikzstyle{rwn}=[line width=.5pt,rwhead,->.>|]
\tikzstyle{rwp}=[line width=.5pt,rwhead,->,double]
\tikzstyle{rwps}=[line width=.5pt,rwhead,->.>,double]
\newcommand\rw  {\mathrel{\tikz\draw[rw]  (0,0)--(10pt,0pt);}}
\newcommand\rws {\mathrel{\tikz\draw[rws] (0,0)--(10pt,0pt);}}
\newcommand\rwp {\mathrel{\tikz\draw[rwp] (0,0)--(10pt,0pt);}}
\newcommand\rwps{\mathrel{\tikz\draw[rwps](0,0)--(10pt,0pt);}}
\newcommand\rwls [1][\relax]{\mathrel{_{#1}\tikz\draw[rws] (10pt,0pt)--(0,0);}}
\newcommand\rwlp [1][\relax]{\mathrel{_{#1}\tikz\draw[rwp] (10pt,0pt)--(0,0);}}
\newcommand\rwlps[1][\relax]{\mathrel{_{#1}\tikz\draw[rwps](10pt,0pt)--(0,0);}}
\newcommand\val{{v}}
\newcommand\ttret{\texttt{return}} 
\newcommand\ttlet{\texttt{\upshape let}}
\newcommand\ttto{\texttt{to}}
\newcommand\ttforce{\texttt{force}}
\newcommand\ttthunk{\texttt{thunk}}
\newcommand\ttin{\texttt{\upshape in}}
\newcommand\ac{\kern1pt{\smallbin\ggg}\kern1pt} 
\newcommand\arr{\mathsf{arr}}
\newcommand\first{\mathsf{first}}
\newcommand\TR[1]{\!\scriptstyle{\tr{#1}\vphantom{pb}}}%
\newcommand\tr[1]{%
  \ifx#1*\star\else%
  \ifx#1x\mathsf{var}\else%
  \ifx#1l\mathsf{abs}\else%
  \ifx#1a\mathsf{app}\else%
  \ifx#1c\mathsf{cut}\else%
  \ifx#1f\mathsf{con}\else%
  \ifx#1<\mathsf{lcut}\else%
  \ifx#1>\mathsf{rcut}\else%
  \fi\fi\fi\fi\fi\fi\fi\fi%
}
\newcommand\qr[1]{%
  \ifx#1*\star\else%
  \ifx#1x\mathsf{AX}\else%
  \ifx#1l\mathsf{ABS}\else%
  \ifx#1a\mathsf{APP}\else%
  \ifx#1c\mathsf{CUT}\else%
  \ifx#1f\mathsf{con}\else%
  \ifx#1<\mathsf{lcut}\else%
  \ifx#1>\mathsf{rcut}\else%
  \fi\fi\fi\fi\fi\fi\fi\fi%
}
\newcommand\sr[1]{\vphantom(%
  \ifx#1l{\imp}\!\mathsf R\else%
  \ifx#1a{\imp}\!\mathsf L\else%
  \ifx#1x\mathsf{ax}\else%
  \ifx#1c\mathsf{cut}\else%
  \ifx#1V{\forall}\mathsf R\else%
  \ifx#1A{\forall}\mathsf L\else%
  \ifx#1e\textsf{eta}\else%
  \ifx#1z\textsf{cls}%
  \fi\fi\fi\fi\fi\fi\fi\fi%
}
\newcommand\eval{\kern1pt{\Downarrow}\kern1pt}
\newcommand\machine{\@ifstar\@textmachine\@mathmachine}
\newcommand\@textmachine[4]{(#1,\term{#2})\eval(#3,\term{#4})}
\newcommand\@mathmachine[4]{%
\begin{array}{@{(~}l@{~,~}r@{~)}}%
#1 & \term {#2}%
\\\hline\hline\rule[-5pt]{0pt}{15pt}%
#3 & \if#4*\term*\;\else\term{#4}\fi%
\end{array}%
}
\newcommand\step[4]{%
\begin{array}{@{(~}l@{~,~}r@{~)}}%
#1 & \term {#2}%
\\\hline%
#3 & \if#4*\term*\;\else\term{#4}\fi%
\end{array}%
}
\newcommand\run[1]{\textsc{run}(\type{#1})}
\newcommand\diagdots[2][1.5,3]{
  \node[fill=black,circle,inner sep=0pt,minimum size=1.5pt] at ($(#2) - (#1)$) {};
  \node[fill=black,circle,inner sep=0pt,minimum size=1.5pt] at (#2) {};
  \node[fill=black,circle,inner sep=0pt,minimum size=1.5pt] at ($(#2) + (#1)$) {};
}
\tikzstyle{termbox}=[draw=term,fill=term!10,rounded corners,minimum size=20pt]
\tikzstyle{tallbox}=[draw=term,fill=term!10,rounded corners,minimum width=20pt,minimum height=40pt]
\tikzstyle{termpic}=[x=1pt,y=1pt,inner sep=0pt,outer sep=0pt,thick]
\newcommand\wires[5]{
  \node[anchor=east] at (#2,#3) {$\type{#1}$};
  \node[anchor=west] at (#4,#3) {$\type{#5}$};
  \draw ($(#2,#3) + ( 2, 6)$) -- ($(#4,#3) + (-2, 6)$);
  \draw ($(#2,#3) + ( 8,-6)$) -- ($(#4,#3) + (-8,-6)$);
  \diagdots[-1.5,3]{$(#2,#3) + (10,0)$}
  \diagdots[ 1.5,3]{$(#4,#3) - (10,0)$}
}
\newcommand\wiresright[4]{
  \node[anchor=west] at (#3,#2) {$\type{#4}$};
  \draw ($(#1,#2) + ( 0, 6)$) -- ($(#3,#2) + (-2, 6)$);
  \draw ($(#1,#2) + ( 0,-6)$) -- ($(#3,#2) + (-8,-6)$);
  \diagdots[ 1.5,3]{$(#3,#2) - (10,0)$}
}
\newcommand\wiresleft[4]{
  \node[anchor=east] at (#2,#3) {$\type{#1}$};
  \draw ($(#2,#3) + ( 2, 6)$) -- ($(#4,#3) + ( 0, 6)$);
  \draw ($(#2,#3) + ( 8,-6)$) -- ($(#4,#3) + ( 0,-6)$);
  \diagdots[-1.5,3]{$(#2,#3) + (10,0)$}
}
\begin{document}

\begin{frontmatter}
	\title{The Functional Machine Calculus}
	\author{Willem Heijltjes\thanksref{ALL}}
	\address{Department of Computer Science\\University of Bath\\Bath, United Kingdom}
	\thanks[ALL]{Email: \href{wbh22@bath.ac.uk}{wbh22@bath.ac.uk}}
	\begin{abstract}
		This paper presents the Functional Machine Calculus (FMC) as a simple 
		model of higher-order computation with ``reader/writer'' effects: higher-order
	    mutable store, input/output, and probabilistic and non-deterministic
	    computation.	    
	    
	    The FMC derives from the lambda-calculus by taking the standard operational
	    perspective of a call--by--name stack machine as primary, and introducing two
	    natural generalizations. One, ``locations'', introduces multiple stacks, which
	    each may represent an effect and so enable effect operators to be encoded
		into the abstraction and application constructs of the calculus. The second,
		``sequencing'', is known from kappa-calculus and concatenative programming 
		languages, and introduces the imperative notions of ``skip'' and ``sequence''.
		This enables the encoding of reduction strategies, including call--by--value
		lambda-calculus and monadic constructs.
		
		The encoding of effects into generalized abstraction and application means that
		standard results from the lambda-calculus may carry over to effects. The main
		result is confluence, which is possible because encoded effects reduce 
		algebraically rather than operationally. Reduction generates the familiar 
		algebraic laws for state, and unlike in the monadic setting, reader/writer 
		effects combine seamlessly. A system of simple types confers 
		termination of the machine.
	\end{abstract}
	\begin{keyword}
		lambda-calculus, 
		computational effects, 
		confluence,
		concatenative programming
	\end{keyword}
\end{frontmatter}




\section{Introduction}

Higher-order programming and computational effects are ubiquitous in modern programs. Understanding them, and in particular their potent combination, is therefore an important challenge to computer science. Higher-order functional programming enjoys an elegant foundational theory in the $\lambda$-calculus, where $\beta$-reduction gives rise not only to operational semantics---by imposing an evaluation strategy---but also to an equational theory which may be regarded as definitive for higher-order functions. For computational effects, however, there are many approaches and, as yet, no single definitive theory. Such a theory would ideally include a convenient syntax, expressing a natural and convincing semantics, and supporting reasoning tools and methods such as type-systems and compile-time optimizations, while remaining amenable to refinement, extension and variation.

The rich history of approaches to the problem of computational effects
in a higher-order setting includes Landin's pioneering work~\cite{Landin-1965}, which
cemented the central position of $\lambda$-calculus, and highlighted
the difficulty of reconciling the drive for a complete theory with the
practice of programming: Landin focussed on a call-by-value strategy
and used \emph{thunks} to delay evaluation where necessary. A more
modular and flexible account was provided by Moggi's use of \emph{monads}~\cite{Moggi-1991},
which has influenced not only theoretical work but also the design of
the Haskell programming language. However, a fundamental problem with
monads is that they don't compose, and in practice multiple effects
are combined by building a stack of monad transformers, which can
become unwieldy for the programmer. Many alternatives and refinements
have been proposed, including 
  uniqueness types~\cite{Smetsers-Barendsen-vanEekelen-Plasmeijer-1993}, 
  continuations~\cite{Filinski-1996},
  encodings in (intuitionistic) linear logic~\cite{Benton-Wadler-1996,Maraist-Odersky-Turner-Wadler-1999}
  and in process calculi~\cite{Milner-1992,Hirschkoff-Prebet-Sangiorgi-2020}, 
  premonoidal categories~\cite{Power-Robinson-1997,Power-Thielecke-1999,Levy-Power-Thielecke-2003},
  Call--By--Push--Value~\cite{Levy-2003} 
  and its variants in linear logic~\cite{Egger-Mogelberg-Simpson-2014,Ehrhard-Guerrieri-2016},
  Arrows~\cite{Hughes-2000}, 
  algebraic effects~\cite{Plotkin-Power-2002-FOSSACS,Ahman-Staton-2013-ENTCS},
  and effect handlers~\cite{Plotkin-Pretnar-2009}.

This paper offers a new solution to this challenge: the \emph{Functional Machine Calculus} (FMC). It includes the effects of state, input/output, and probabilistic and non-deterministic computation---here referred to collectively as \emph{reader/writer effects}. The FMC consists of two independent generalizations of the $\lambda$-calculus, \emph{locations} and \emph{sequencing}, that individually give two fragments, the \emph{poly-$\lambda$-calculus} and the \emph{sequential $\lambda$-calculus}. It enjoys a clean equational theory supported by a confluent reduction semantics, and expresses both effects and higher-order features in the same terms, retaining the simplicity of the $\lambda$-calculus while being powerful enough to capture the reality of higher-order programming with multiple effects. We provide operational semantics in terms of an abstract machine, and a type system that confers termination of the machine. The remainder of this section will introduce both generalizations. Throughout the paper, proofs are omitted when they are straightforward.


\subsection{Locations}

The main objective of this work has been to preserve confluence, following the recent presentation of a confluent probabilistic $\lambda$-calculus by Dal Lago, Guerrieri, and Heijltjes~\cite{DalLago-Guerrieri-Heijltjes-2020}. This is perhaps surprising, as $\lambda$-calculi with effects are known to be non-confluent. The apparent contradiction disappears by disentangling the \emph{operational} and the \emph{algebraic} aspects of evaluation. In $\lambda$-calculus, $\beta$-reduction is algebraic (or more precisely, $\beta$-\emph{equivalence} is), while stack machines such as Krivine's~\cite{Krivine-2007} give an operational semantics. For effects, looking up a global variable or generating a random value is operational, while effect operators may interact algebraically via the laws of Plotkin and Power~\cite{Plotkin-Power-2002-FOSSACS}.

\begin{center}
\begin{tabular}{@{}l@{\qquad}l@{\qquad}l@{}}
				            & \textbf{global, operational} 	& \textbf{local, algebraic}
\\ \hline
\textbf{$\lambda$-calculus} & stack machines        		& $\beta$-reduction
\\       
\textbf{effects} 			& update, lookup, read, write, random & algebraic effect equations
\end{tabular}
\end{center}

Our starting point is the observation that for both the $\lambda$-calculus and reader/writer effects, the operational side can be given by push and pop actions on global stacks or streams. In a simple stack machine for the $\lambda$-calculus, \emph{application} $\term{M\,N}$ pushes its argument $\term N$ to the stack and continues as $\term M$, and \emph{abstraction} $\term{lx.M}$ pops a term $\term N$ from the stack and binds it to $\term x$, to continue as $\term{\{N/x\}M}$ (the capture-avoiding substitution of $\term N$ for $\term x$ in $\term M$). Then, the following effects are also modelled via stacks or streams:
\begin{itemize}
	\item
reading from \emph{input} is a pop from an input stream;
	\item 
writing to \emph{output} is a push to an output stream;
	\item
a memory cell $\term c$ is modelled by a stack of depth one, where
\begin{itemize}
	\item \emph{update} $\term{c := N}$ pops from $\term c$, discarding the value, then pushes the new value $\term N$;
	\item \emph{lookup} $\term{`!c}$ pops the value $\term N$ from $\term c$, pushes $\term N$ to reinstate $\term c$, and then returns $\term N$;
\end{itemize}
	\item
\emph{probabilistic} and \emph{non-deterministic} generators can be modelled as separate input streams.
\end{itemize}
This idea is captured in the \emph{poly-$\lambda$-calculus}: we introduce a set of \emph{locations} to represent independent stacks or streams on the machine, and parameterize abstraction and application in this set to act as pop and push actions on the corresponding stack. Effect operators are then encoded in these constructs according to the above scheme. Beta-reduction, generalized to multiple locations, remains confluent, and for encoded effect operators it gives rise to the expected algebraic laws~\cite{Plotkin-Power-2002-FOSSACS}. However, this encoding of effects forces their call--by--name semantics, while programming with effects requires control over when they are called. This is the purpose of the second generalization, \emph{sequencing}.
 

\subsection{Sequencing}

The literature offers several ways to control reduction behaviour in higher-order languages, including continuation encodings between $\cbv$ and $\cbn$~\cite{Plotkin-1975}, and call--by--push--value ($\cbpv$)~\cite{Levy-2003} which encodes both. To complement \emph{locations}, our approach takes the stack machine as primary. Viewing the $\lambda$-calculus as a language of machine instruction sequences, the \emph{sequencing} generalization extends it with composition and the empty sequence, analogous to imperative ``sequence'' and ``skip''; not by introducing these as primitives, but again by generalizing the calculus in a subtle way so that they arise naturally.
Such designs have arisen several times before: in the first-order $\kappa$-calculus of Hasegawa~\cite{Hasegawa-1995}, generalized to higher-order in the context of premonoidal categories~\cite{Power-Thielecke-1999}; as the $\Lambda_s$-calculus in an analysis of compilers~\cite{Douence-Fradet-1998}; and in higher-order stack programming languages, also called \emph{concatenative} languages~\cite{Herzberg-Reichert-2009}, such as \emph{Joy}~\cite{vonThun-2001}, \emph{$\lambda$-FORTH}~\cite{Lynas-Stoddart-2006}, \emph{Cat}~\cite{Diggins-2008}, and closest to our design, \emph{Factor}~\cite{Pestov-Ehrenberg-Groff-2010}.

Douence and Fradet demonstrate how their $\Lambda_s$ encodes Plotkin's $\cbv$ $\lambda$-calculus~\cite{Plotkin-1975} as well as Moggi's monadic constructs~\cite{Moggi-1989-LICS,Moggi-1991}, illustrating how this design gives control over reduction. We will recall these encodings in Sections~\ref{sec:cbnencodings} and~\ref{sec:cbvencodings}, and demonstrate the encoding of $\cbpv$ and Arrows~\cite{Hughes-2000}.


\subsection{The Functional Machine Calculus}
The FMC combines both generalisations, \emph{locations} and \emph{sequencing}, in a simple model of higher-order computation with multiple effects. Its design and solid foundations in semantics mean that several important properties of the $\lambda$-calculus are preserved. The aim of this paper is introductory: it presents the syntax, operational semantics, and fundamental ideas and results with an emphasis on explanation and examples. In a forthcoming  paper we will deepen these results with a strong normalization theorem, a domain-theoretic semantics of the untyped calculus, and semantics of the typed calculus in premonoidal and in Cartesian closed categories. The main results for the FMC as presented here are the following:

\begin{description}
	\item[Confluence]
Beta-reduction is confluent, with evaluation behaviour expressed in syntax.
	\item[Algebraic effects]
The algebraic laws for reader/writer effects arise from reduction.
	\item[Compositionality]
Reader/writer effects combine seamlessly, due to the use of independent locations.
	\item[Types]
Simple types cover (higher-order) effect operations and confer termination of the machine.
\end{description}


\section{The poly-lambda-calculus}

We introduce a set of \emph{locations} $A$, ranged over by $\term{a,b,c},\dots$, to indicate the different stacks or streams for each effect. Abstraction and application are parameterized in $A$ to give the corresponding pop and push actions. We write application $\term{M\,N}$ as $\term{[N].M}$ to emphasize the operational reading (push $\term N$ and continue as $\term M$), to easily attach a location $\term a$, and to give unique parsing---cf.\ De Bruijn~\cite{DeBruijn-1993}. Abstraction $\term{lx.N}$ is written $\term{<x>.N}$ to emphasize the duality with application.

\begin{definition}
The \emph{poly-$\lambda$-calculus} is given by the grammar 
\[
	\term{M,N} ~\Coloneqq~ \term{x} ~\mid~ \term{[N]a.M} ~\mid~ \term{a<x>.M}
\]
with from left to right a \emph{variable}, an \emph{application} or \emph{push action} on location $\term a$ with function $\term M$ and argument $\term N$, and an \emph{abstraction} or \emph{pop action} on location $\term a$ that binds $\term x$ in $\term M$. Terms are considered modulo $\alpha$-equivalence. The regular $\lambda$-calculus embeds via a dedicated \emph{main} location $\term l\in A$, omitted from terms for brevity; so we may write $\term{lx.M}$ or $\term{<x>.M}$ for $\term{l<x>.M}$, and $\term{M\,N}$ or $\term{[N].M}$ for $\term{[N]l.M}$.

The \emph{poly-stack machine} is given by the following data. A \emph{stack} of terms $S$ is written with the top element to the right; we define them inductively below left, but they should better be considered as coinductive, to include streams. A \emph{memory} $S_A$ is a family of stacks or streams in $A$, defined below left. We write $S_A;S_a$ to identify the stack for $a$ in $S_A$. A \emph{state} is a pair $(S_A,\term M)$, and the \emph{transitions} or \emph{steps} are given as top--to--bottom rules below centre. A \emph{run} of the machine is a sequence of steps, written as $(S_A,\term M)\eval(T_A,\term N)$ or with a double line as below right.
\[
\begin{array}{l@{}l}
	S   &~\Coloneqq~\e~\mid~S{\cdot}\term M
\\	S_A &~\Coloneqq~\{\,S_a\,\mid\,a\in A\}
\end{array}
\qquad
\begin{array}{@{(~ }l@{~,~}r@{~)}}
	S_A~;~S_a 	         & \term{[N]a.M}
\\\hline
	S_A~;~S_a{\cdot}\term N &      \term M
\end{array}
\qquad
\begin{array}{@{(~}l@{~,~}r@{~)}}
	S_A~;~S_a{\cdot}\term N &   \term{a<x>.M}
\\\hline
	S_A~;~S_a               & \term{\{N/x\}M}
\end{array}
\qquad
\machine{S_A}M{T_A}N
\]
\end{definition}


\subsection{Encoding effects}

Consider the following $\lambda$-calculus with effects. We will encode it in the poly-$\lambda$-calculus, with its \emph{lazy} or $\cbn$ semantics. At the end of this section we will consider what would be needed to encode its \emph{eager} or $\cbv$ semantics. (We assume familiarity with the operational semantics of $\lambda$-calculus and of effects; for an introduction see e.g.\ Winskel~\cite{Winskel-1993}.)
\[
\begin{array}{r@{~}c@{~}l@{\qquad}l}
	M,N,P & \Coloneqq & x~\mid~M\,N~\mid~\lambda x.M 	& \emph{$\lambda$-calculus}
\\		  & \mid & \rd~\mid~\wrt N;M 					& \emph{input/output}	
\\		  & \mid & c := N;M~\mid~!c  					& \emph{state update and lookup}
\\		  & \mid & N\oplus M ~\mid~ N+M      			& \emph{probabilistic and non-deterministic sum}
\end{array}
\]
The $\cbn$-encoding will follow the description in the introduction. The constructs of the above language are introduced as defined constructs (``sugar'') into the poly-$\lambda$-calculus.

	\textbf{Input/output:}
Input uses a dedicated \emph{input} location $\term\inp\in A$ and is encoded by $\rd\defeq\term{\inp<x>.x}$. The machine is initialized with a stream $S_\inp=\cdots\term{N_3}\cdot\term{N_2}\cdot\term{N_1}$ (infinite to the left), and the pop transition gives the expected operational semantics, below left. 
Writing to output uses a dedicated \emph{output} location $\term\out\in A$ and is encoded by $\wrt N;M\defeq\term{[N]\out.M}$. Evaluation then generates an output stream $\term{N_1},\term{N_2},\dots$ (finite at any step) by the \emph{push} machine transition, below right.
\[
\begin{array}{@{(~}l@{~,~}r@{~)}}
	S_A~;~S_\inp\cdot\term{N} & \term{\inp<x>.x}
\\\hline
	S_A~;~S_\inp              & \term{N}
\end{array}
\qquad
\begin{array}{@{(~}l@{~,~}r@{~)}}
	S_A~;~S_\out                & \term{[N]\out.M}
\\\hline
	S_A~;~S_\out\cdot\term{N} &           \term M 
\end{array}
\]

	\textbf{State:}
A memory cell is modelled by a location $\term c\in A$. The associated stack is expected to hold at most one value, which is preserved by the encoding of the operators, and not enforced externally. Update and lookup are encoded by $\term{c := N ; M}~\defeq~ \term{c<\_>.[N]c.M}$ and $\term{`!c}~\defeq~\term{c<x>.[x]c.x}$ where the underscore $(\term\_)$ represents a variable that does not occur in $\term M$ or $\term N$. In the machine, the stack for each cell is initialized with a (dummy) value, and the transitions then give the expected operational semantics.
\[
\scalebox{0.9}{$
\begin{array}{@{(~}l@{~,~}r@{~)}}
    	 S_A~;~\e_c\cdot \term P & \term{c<\_>.[N]c.M}
\\\hline S_A~;~\e_c              &       \term{[N]c.M}
\\\hline S_A~;~\e_c\cdot \term N &            \term{M}
\end{array}
\qquad
\begin{array}{@{(~}l@{~,~}r@{~)}}
    	 S_A~;~\e_c\cdot \term M &  \term{c<x>.[x]c.x}
\\\hline S_A~;~\e_c              &       \term{[M]c.M}
\\\hline S_A~;~\e_c\cdot \term M &            \term{M}
\end{array}
$}
\]

	\textbf{Probabilistic and non-deterministic sums:}
Following the probabilistic case~\cite{DalLago-Guerrieri-Heijltjes-2020}, probabilistic and non-deterministic sums are included via dedicated locations $\term\rnd,\term\nd\in A$ by $N\oplus M ~\defeq~ \term{\rnd<x>.x\,M\,N}$ and $N+M ~\defeq~ \term{\nd<x>.x\,M\,N}$. The machine is initialized with the corresponding streams of Church-encoded Booleans $\term{lx.ly.x}$ and $\term{lx.ly.y}$, generated probabilistically for $\term\rnd$ and non-deterministically for $\term\nd$. The machine steps are as expected.

\begin{example}
\label{ex:operational}
Consider the following example term and its $\cbn$ encoding in the poly-$\lambda$-calculus. (Numbers can be seen informally as constants, or as Church numerals.)
\[
	\trm{"1a := 2"0 ; ("2 lx "5. `!a)\,"0("3a := 3"0 ;"40"0)}
\quad=\quad
	\trm{"1a<\_>.[2]a "0.["3a<\_>.[3]a"0 . "40"0] . "2 <x>. "5a<y>.[y]a.y}
\]
Its $\cbn$ reduction gives $\term 2$. It evaluates in the machine as follows (where the cell $\term a$ is initialized with zero).
\[
\scalebox{0.85}{$
\begin{array}{@{(~}l@{~;~}l@{~,~}r@{~)}}
    \e_a{\cdot}\term 0 & \e_\lambda                                 &  \trm{"1a<\_>.[2]a "0.["3a<\_>.[3]a"0 . "40"0] . "2 <x>. "5a<y>.[y]a.y}
\\\hline
    \e_a               & \e_\lambda                                 &  \trm{"1[2]a "0.["3a<\_>.[3]a"0 . "40"0] . "2 <x>. "5a<y>.[y]a.y}
\\\hline
    \e_a{\cdot}\term 2 & \e_\lambda                                 &  \trm{["3a<\_>.[3]a"0 . "40"0] . "2 <x>. "5a<y>.[y]a.y}
\\\hline
    \e_a{\cdot}\term 2 & \e_\lambda{\cdot}\trm{"3a<\_>.[3]a"0."40}  &  \trm{"2<x>."5a<y>.[y]a.y}
\\\hline
    \e_a{\cdot}\term 2 & \e_\lambda                                 &  \trm{"5a<y>.[y]a.y}
\\\hline
    \e_a               & \e_\lambda                                 &  \trm{"5["12"5]a."12}
\\\hline
    \e_a{\cdot}\term 2 & \e_\lambda                                 &  \trm{"12}
\end{array}$}
\]
\end{example}


\subsection{Beta-reduction}

In the $\lambda$-calculus, $\beta$-reduction lets successive push- and pop-actions interact. Generalizing to multiple locations, these must be actions on the same stack, while other stacks may be accessed in-between. The $\beta$-rule is then as below, where each $\term{X_i}$ is an abstraction or application not on location $\term a$, and (if the former) not capturing in $\term N$. Reduction is closed under any context. (A formal definition is given in Section~\ref{sec:FMC}.)
\[
	\term{[N]a.X_1\dots X_n.a<x>.M}~\rw~\term{X_1\dots X_n.\{N/x\}M}
\]

\begin{example}
The term from example~\ref{ex:operational} reduces as follows, with reduced redexes underlined.
\begin{center}$
		\trm{"1a<\_>.}\utrm{"1[2]a}\trm{.["3a<\_>.[3]a"0."40"0]."2<x>"0.}\utrm{"5a<y>}\trm{"5.[y]a.y}
~\rw~	\trm{"1a<\_>"0.}\utrm{["3a<\_>.[3]a"0."40"0]."2<x>}\trm{."5["12"5]a."12}
~\rw~	\trm{"1a<\_>"0. "5["12"5]a."12}
~=~		\trm{"{1!50!5}a := 2 "0 ; "1 2}
$\end{center}
\end{example}

Analogous to $\beta$-reduction, $\eta$-reduction is as below, where each $\term{X_i}$ is an abstraction or application not on location $\term a$, and $\term{x}$ does not occur free in any of the $\term{X_i}$ nor in $\term M$. 
\[
	\term{a<x>.X_1\dots X_n.[x]a.M}~\rw_\eta~\term{X_1\dots X_n.M}
\]
As an alternative to these rule schemes, terms may be taken modulo an equivalence $\sim$ generated by the \emph{permutations} below left, and $\beta$- and $\eta$-reduction defined only on adjacent operators, as below right. Observe that the machine semantics immediately validates these equivalences. We will use the below formulation to consider the relation with algebraic effects, but otherwise use the above formulation.
\[
\begin{aligned}
	\trm{"3[M]a."4[N]b."1P} &~\sim~\trm{"4[N]b."3[M]a."1P} \\
	\trm{"3<x>a."4[N]b."1P} &~\sim~\trm{"4[N]b."3<x>a."1P} &&\text{ if }\trm{"3x}\notin\fv{\trm{"4N}} \\
	\trm{"3<x>a."4<y>b."1P} &~\sim~\trm{"4<y>b."3<x>a."1P}
\end{aligned}
\qquad
\begin{aligned}
	\term{[N]a.a<x>.M} &~\rw_\beta~\term{\{N/x\}M}
\\	\term{a<x>.[x]a.M} &~\rw_\eta~\term M \quad &&\text{ if }\term x\notin\fv{\term M}
\end{aligned}
\]

Beta-reduction is confluent, as will be shown more generally for the FMC in Section~\ref{sec:FMC}. This is possible because it follows the algebraic laws for effects~\cite{Plotkin-Power-2002-FOSSACS} instead of their operational semantics. For instance, laws for the interaction of \emph{lookup} and \emph{update} correspond to the following reductions.
\[
\begin{array}{@{}l@{}l@{}l@{}l@{}}
	\trm{"2c:=M;\,"1c:=N;"3P} &~=~ \trm{"2c<\_>.}\utrm{"2[M]c."1c<\_>}\trm{"1.[N]c."3P} & ~\rw~ \trm{"2c<\_>."1[N]c."3P} &~=~ \trm{"5c:=N;"3P}
\\	\trm{"2c:=M;\,"1`!c}      &~=~ \trm{"2c<\_>.}\utrm{"2[M]c."1c<x>}\trm{"1.[x]c.x}    & ~\rw~ \trm{"2c<\_>."1[M]c."5M} &~=~ \trm{"5c:=M;\,M}
\end{array}
\]
The seven algebraic laws for global state of Plotkin and Power~\cite[p.~348]{Plotkin-Power-2002-FOSSACS} arise in our setting from $\beta/\eta$-reduction and $\sim$. Their notation has \emph{update} $u_{\mathit{loc},v}(M)$ of location $\mathit{loc}$ with value $v$ in $M$, and \emph{lookup} $l_{\mathit{loc}}(M)_v$ of the value $v$ at location $\mathit{loc}$ using the value $v$ as a parameter in $M$. These are encoded in the poly-$\lambda$-calculus as below, using abstraction with $x$ instead of parametrization in $v$ in the lookup case. Values $v$ may be taken as arbitrary poly-$\lambda$-terms.
\[
	u_{a,v}(M) ~\defeq~ \term{a<\_>.[v]a.M}
\qquad
	l_a(M)_x   ~\defeq~ \term{a<x>.[x]a.M}
\]

\begin{proposition}
The poly-$\lambda$-calculus with $\rw_{\beta\eta}\cup\sim$ generates the algebraic laws for state.
\end{proposition}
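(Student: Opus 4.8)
The plan is to verify each of the seven equations of Plotkin and Power directly, by translating both sides through the encodings $u_{a,v}(M) = \term{a<\_>.[v]a.M}$ and $l_a(M)_x = \term{a<x>.[x]a.M}$ and exhibiting, between the two results, either a $\beta\eta$-reduction or a $\sim$-equivalence, so that the two sides are identified in the theory $\rw_{\beta\eta}\cup\sim$. The seven laws split cleanly into two groups: the four \emph{interaction} laws for a single location, which I expect to follow from $\beta$- and $\eta$-reduction, and the three \emph{commutation} laws for two distinct locations, which I expect to follow entirely from the permutations $\sim$. Two of the interaction laws — update-update and update-lookup — are already exhibited just above, each by a single $\beta$-step, so only the remaining five need attention.

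For the single-location laws the key observation is that in the encoding an update is a pop-then-push and a lookup is a pop-then-push-back, so that adjacent push/pop pairs on a shared location are always $\beta$-redexes. Writing $\term x,\term y$ for the looked-up values, the lookup-lookup law $l_a(\lambda v.l_a(\lambda w.M)) = l_a(\lambda v.\{v/w\}M)$ translates to $\term{a<x>.[x]a.a<y>.[y]a.M}$, whose inner push-pop $\term{[x]a.a<y>}$ is a $\beta$-redex substituting $\term x$ for $\term y$; one step yields $\term{a<x>.[x]a.\{x/y\}M}$, which is the right-hand side. The write-back law $l_a(\lambda v.u_{a,v}(M)) = M$ translates to $\term{a<x>.[x]a.a<\_>.[x]a.M}$; the redex $\term{[x]a.a<\_>}$ $\beta$-reduces (discarding the value) to leave $\term{a<x>.[x]a.M}$, and here — using the Plotkin--Power side-condition that the continuation does not depend on the looked-up value, i.e.\ $\term x\notin\fv{\term M}$ — a single $\eta$-step gives $\term M$. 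This is the one place the $\eta$-rule is needed.

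For the three commutation laws I would fix distinct locations $\term a\neq\term b$ and note that each operator expands to a short block of push/pop actions on its own location, so that sliding one block past the other is achieved by repeatedly applying the three permutation schemes of $\sim$. Concretely, commuting an update past a lookup rewrites $\term{a<\_>.[v]a.b<w>.[w]b.P}$ into $\term{b<w>.[w]b.a<\_>.[v]a.P}$ by four adjacent swaps (push-past-pop, pop-past-pop, push-past-push, pop-past-push), and the update-update and lookup-lookup commutations are analogous. The main obstacle — really the only subtlety, since every case is a one- to four-step calculation — is discharging the freshness side-condition $\term x\notin\fv{\term N}$ attached to the pop-past-push permutation: this is handled uniformly by choosing each bound (looked-up or discarded) variable fresh, which $\alpha$-equivalence permits, so that a lookup variable never occurs in the value of an update it crosses and a discarded update variable never occurs in a value it crosses. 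With these renamings every required permutation is licensed, and all seven laws are obtained.
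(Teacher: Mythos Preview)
Your proposal is correct and follows essentially the same approach as the paper: both verify the seven Plotkin--Power laws one by one, using $\beta$ (and once $\eta$, for the lookup--writeback law) on a single location and the permutations $\sim$ for the three commutation laws on distinct locations, with the freshness side-conditions handled by $\alpha$-renaming. The paper's presentation collapses each chain of permutations into a single $\sim$-step since $\sim$ is the generated congruence, whereas you spell out the four adjacent swaps; this is the only cosmetic difference.
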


\begin{proof}
By the following equations, where $a\neq b$ and in equation 7, $x\notin\fv v$.
\[
\begin{array}{llllllll}
	1. & l_a(u_{a,y}(x))_y      &~=~& \term{a<y>.[y]a.a<\_>.[y]a.x}      &~\rw_\beta~& \term{a<y>.[y]a.x}                 &~\rw_\eta~& x
\\	2. & l_a(l_a(M_{x,y})_x)_y  &~=~& \term{a<y>.[y]a.a<x>.[x]a.M_{x,y}} &~\rw_\beta~& \term{a<y>.[y]a.\{y/x\}M_{x,y}}    &~=~& l_a(M_{y,y})_y
\\	3. & u_{a,v}(u_{a,v'}(x))   &~=~& \term{a<\_>.[v]a.a<\_>.[v']a.x}    &~\rw_\beta~& \term{a<\_>.[v']a.x}               &~=~& u_{a,v'}(x)
\\	4. & u_{a,v}(l_a(M_x)_x)    &~=~& \term{a<\_>.[v]a.a<x>.[x]a.M_x}    &~\rw_\beta~& \term{a<\_>.[v]a.\{v/x\}M_x}       &~=~& u_{a,v}(M_v)
\\	5. & l_a(l_b(M_{x,y})_y)_x  &~=~& \term{a<x>.[x]a.b<y>.[y]b.M_{x,y}} &~\sim~&      \term{b<y>.[y]b.a<x>.[x]a.M_{x,y}} &~=~& l_b(l_a(M_{x,y})_x)_y 
\\	6. & u_{a,v}(u_{b,v'}(x))   &~=~& \term{a<\_>.[v]a.b<\_>.[v']b.x}    &~\sim~&      \term{b<\_>.[v']b.a<\_>.[v]a.x}    &~=~& u_{b,v'}(u_{a,v}(x))
\\  7. & u_{a,v}(l_b(M_x)_x)    &~=~& \term{a<\_>.[v]a.b<x>.[x]b.M_x}    &~\sim~&      \term{b<x>.[x]b.a<\_>.[v]a.M_x}    &~=~& l_b(u_{a,v}(M_x))_x
\end{array} 
\]
\end{proof}


\subsection{Poly-types}

A simple type for a poly-term represents its expected inputs, taken from multiple independent locations. Correspondingly, the antecedent of an implication is parameterized in a location, and implications on distinct locations may permute.

\begin{definition}
\define{Simple poly-types} are given by the grammar below left, where $\type o$ (omicron) is a \define{base} type and $\type{a(s)\imp t}$ an \define{arrow} type, and considered modulo the congruence $\sim$ given below right.
\[
	\type{r,s,t} ~\Coloneqq~ \type o~\mid~ \type{a(s)\imp t}
\qquad\qquad
	\type{a(r)\imp b(s)\imp t}~~\sim~~\type{b(s)\imp a(r)\imp t} \quad (\type a\neq\type b)~.
\]
The typing rules are as follows, where a \emph{context} $\Gamma$ is a finite function from variables to types.
\[
	\infer{\term{G , x:t |- x:t}}{}
\qquad\qquad
	\infer{\term{G |- a<x>.M : a(s) \imp t}}{\term{G , x:s |- M: t}}
\qquad\qquad
	\infer{\term{G |- [N]a.M : t}}{\term{G |- N : s} && \term{G |- M : a(s)\imp t}}
\]
\end{definition}

Observe that the congruence $\sim$ means that a term $\term{M:a(r)\imp b(s)\imp t}$ may be prefixed by a push action $\term{[N]a}$ where $\term{N:r}$ or by one $\term{[P]b}$ where $\term{P:s}$ (or both, in either order).


\subsection{Towards encoding call--by--value semantics}

The poly-$\lambda$-calculus gives control over when effects are called, as we demonstrate by the following example.

\begin{example}
\label{ex:three args}
Consider the following example term, which is a normal form with $\cbn$ semantics.
\[
	\trm{"1f~"2(c := 2; 0)~"3(`!c)~"4(c := 3;1)}   
\]
With $\cbv$, the arguments may be evaluated left to right, reducing to $\trm{"1f~"20~2~"41}$, or right to left, which gives $\trm{"1f~"20~"43~1}$. The two readings are encoded as follows (using regular applications to $\trm{"1f}$ for readability).
\[
\begin{array}{@{}r@{~}l@{}}
	\trm{"2c<\_>.[2]c"0."3c<x>.[x]c"0."4c<\_>.[3]c"0."1f\,"20\,"3x\,"41}
& 	~\rws~\trm{"2c<\_>"0."4[3]c"0."1f\,"20\,2\,"41}
\\[5pt]
	\trm{"4c<\_>.[3]c"0."3c<x>.[x]c"0."2c<\_>.[2]c"0."1f\,"20\,"3x\,"41}
& 	~\rws~\trm{"4c<\_>"0."2[2]c"0."1f\,"20\,"43\,1}
\end{array}
\]
\end{example}

The encodings rely on repositioning an update $\trm{"4c := 1}$ as a prefix, and for a lookup $\trm{"3`!c}$, on separating the global actions $\trm{"3c<x>.[x]c}$ from the variable $\trm{"3x}$ where the value is used. (The latter idea gives the $\cbv$ semantics in the probabilistic case~\cite{DalLago-Guerrieri-Heijltjes-2020}.) However, it is unlikely that an encoding that only repositions effect operations can encode the $\cbv$ semantics of $\lambda$-calculus with effects. Consider the following example.

\begin{example}
\label{ex:cbv problem}
With a $\cbv$ semantics, the term
\[
	\trm{"1a:= "2(lx. "3b:=1;"2 x)\,"10;"4`!b} 
\]
first reduces the redex, to give $\trm{"1a:= ("3b:=1;"1 0) ;"4`!b}$, which then evaluates by updating $\trm{"3b:=1}$, then $\trm{"1a:=0}$, and reading $\trm{"4`!b}$ as $\trm{"31}$.
To obtain this semantics in the poly-$\lambda$-calculus by manipulaton of effect operations would require lifting $\trm{"3b:=1}$ out of a redex---a process which is likely undecidable in general.
\end{example}

The poly-$\lambda$-calculus thus gives control over effects, but not evaluation behaviour in general. It is an open question whether this is sufficient for practical purposes---one we cannot answer here. Instead, we will consider \emph{sequencing} as a natural way to include $\cbv$ semantics.


\section{The sequential lambda-calculus}
\label{sec:sequencing}

As an instruction sequence for a stack machine, a $\lambda$-term is a string of push and pop actions that must end in a variable. But the machine would naturally accept \emph{any} sequence of actions and variables. Relaxing the variable restriction would further enable composition of sequences. This design of $\lambda$-calculus with sequential composition appears several times in the literature and in practice: as the calculus $\Lambda_s$~\cite{Douence-Fradet-1998}, as the higher-order $\kappa$-calculus~\cite{Power-Thielecke-1999}, and in concatenative programming languages such as Factor~\cite{Pestov-Ehrenberg-Groff-2010}. We call this generalization of the $\lambda$-calculus \emph{sequencing}, and implement it by introducing a \emph{skip} (or \emph{nil}) construct and making the variable a prefix.

\begin{definition}
The \define{sequential $\lambda$-calculus} is given by the following grammar.
\[
\term{M,N,P}
  \quad\Coloneqq\quad \term *
       ~\mid~ \term{x.M}
       ~\mid~ \term{[N].M}
       ~\mid~ \term{<x>.M}
\]
We may omit the trailing $\term{.*}$ from terms for readability. \define{Capture-avoiding composition} $\term{N;M}$ is given by
\[
		\term{*;M}     ~=~ \term M			
\qquad	\term{x.N;M}   ~=~ \term{x.(N;M)}
\qquad	\term{[P].N;M} ~=~ \term{[P].(N;M)}
\qquad	\term{<y>.N;M} ~=~ \term{<y>.(N;M)}
\]
where in the last case $\term y$ is not free in $\term M$. \define{Capture-avoiding substitution} $\term{\{M/x\}N}$ is as follows.
\[
\begin{array}{rcrl@{\qquad}rcrl}
	\term{\{M/x\}*}      &=& \term *
&&	\term{\{M/x\}[P]a.N} &=& \term{[\{M/x\}P]a.\{M/x\}N}
\\	\term{\{M/x\}x.N}    &=& \term{M;\{M/x\}N}	
&&	\term{\{M/x\}a<x>.N} &=& \term{a<x>.N}
\\	\term{\{M/x\}y.N}    &=& \term{y.\{M/x\}N}      & (\term x\neq\term y)
&	\term{\{M/x\}a<y>.N} &=& \term{a<y>.\{M/x\}N} 	& (\term x\neq\term y,~\term y\notin\fv{\term M})
\end{array}
\]
\define{Beta-reduction} is otherwise standard, by closing the rule below left under all contexts. The abstract machine has \define{states} $(S,\term M)$ of a stack and a term, and the transitions below right.
\[
	\term{[N].<x>.M}~\rw~\term{\{N/x\}M}
\qquad\qquad
\begin{array}{@{(~ }l@{~,~}r@{~)}}
	S 	            & \term{[N].M}
\\\hline
	S{\cdot}\term N &     \term M
\end{array}
\qquad
\begin{array}{@{(~}l@{~,~}r@{~)}}
	S{\cdot}\term N &    \term{<x>.M}
\\\hline
	S               & \term{\{N/x\}M}
\end{array}
\]
\end{definition}

\begin{example}
\label{ex:sequential terms}
Consider the following example terms.
\[
	\term{<x>.[x].[x]} 
	\qquad \term{<x>.<y>}
	\qquad \term{[<x>.[x]].<f>.f.f.f}
\]
The first duplicates the top item on the stack; the second removes two items; the third pushes the term $\term{<x>.[x]}$ (which picks up and returns an item), pops it as $\term f$, and runs it three times.
\end{example}

Observe that the changes to evaluation are absorbed by substitution and composition, while $\beta$-reduction and machine evaluation remain unchanged. There is nevertheless a change in perspective from the $\lambda$-calculus, in that the return values or outputs of a computation are pushed to the stack, rather than left as the remainder of the term. Machine evaluation for a term $\term M$ with input stack $S$ is expected to terminate in $\term *$ (just as imperative computation successfully terminates in a \emph{skip} command) with an output stack $T$, i.e.\ $(S,\term M)\eval(T,\term *)$. Then $\term *$ gives the identity run (of zero steps), and $\term{M;N}$ gives composition of runs.

\begin{proposition}
If $(R,\term M)\eval(S,\term*)$ and $(S,\term N)\eval(T,\term*)$ then $(R,\term{M;N})\eval(T,\term*)$.
\end{proposition}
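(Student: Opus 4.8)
The plan is to prove a slightly stronger statement from which the proposition follows at once: \emph{if $(R,\term M)\eval(S,\term*)$, then $(R,\term{M;N})\eval(S,\term N)$ for every term $\term N$} (choosing, by $\alpha$-equivalence, a representative of $\term M$ whose bound variables avoid $\fv{\term N}$, so that the composition is defined). Granting this, the proposition follows by transitivity of $\eval$: from $(R,\term{M;N})\eval(S,\term N)$ and the second hypothesis $(S,\term N)\eval(T,\term*)$ we obtain $(R,\term{M;N})\eval(T,\term*)$. The value of the strengthening is that the second run is used only at this final step and plays no part in the induction.

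I would prove the strengthened claim by induction on the length $k$ of the run $(R,\term M)\eval(S,\term*)$, with a case analysis on the head of $\term M$ forced by whichever transition fires first. If $k=0$ then $\term M=\term*$ and $R=S$; since $\term{*;N}=\term N$ the claim is the empty run $(S,\term N)\eval(S,\term N)$. If $\term M=\term{[P].M'}$, the run factors as the push transition from $(R,\term{[P].M'})$ to $(R{\cdot}\term P,\term{M'})$ followed by a length-$(k{-}1)$ run of $\term{M'}$; since $\term{M;N}=\term{[P].(M';N)}$, the same push leads to $(R{\cdot}\term P,\term{M';N})$, and the induction hypothesis gives $(R{\cdot}\term P,\term{M';N})\eval(S,\term N)$. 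If $\term M=\term{<x>.M'}$, the stack is nonempty, $R=R'{\cdot}\term P$, and the pop transition leads to $(R',\term{\{P/x\}M'})$; since $\term{M;N}=\term{<x>.(M';N)}$ with $\term x\notin\fv{\term N}$, the pop on $\term{M;N}$ leads to $(R',\term{\{P/x\}(M';N)})$, which by the lemma below equals $(R',\term{\{P/x\}M';N})$, so the induction hypothesis applied to the run of $\term{\{P/x\}M'}$ closes the case.

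The supporting fact is the commutation of substitution with composition: \emph{if $\term x\notin\fv{\term N}$ then $\term{\{P/x\}(M;N)}=\term{\{P/x\}M;N}$}. This is a routine structural induction on $\term M$ from the defining clauses of composition and substitution; the only case with any content is $\term M=\term{x.M'}$, where $\term{\{P/x\}(x.M';N)}=\term{P;\{P/x\}(M';N)}$ and one uses associativity of composition, $\term{(P;Q);R}=\term{P;(Q;R)}$ (itself a one-line induction on $\term P$), to rewrite this as $\term{(P;\{P/x\}M');N}=\term{\{P/x\}(x.M');N}$.

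The only conceptual choice, and hence the ``main obstacle,'' is to induct on the run length rather than on the structure of $\term M$: after a pop the continuation is $\term{\{P/x\}M'}$, not $\term{M'}$, so a structural induction on $\term M$ would not supply a usable hypothesis, whereas the step count decreases however substitution reshapes the term. Everything else is bookkeeping, including the observation that a head variable $\term{x.M'}$ admits no transition and is not $\term*$, so it cannot head a successful run and that case is vacuous.
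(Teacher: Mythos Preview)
Your proof is correct. The paper omits the proof of this proposition as straightforward, and your argument is precisely the natural way to fill it in: strengthening to ``$(R,\term M)\eval(S,\term*)$ implies $(R,\term{M;N})\eval(S,\term N)$'' and inducting on the length of the run, with the commutation lemma $\term{\{P/x\}(M';N)}=\term{\{P/x\}M';N}$ for $\term x\notin\fv{\term N}$ handling the abstraction case.
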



\subsection{Sequential types}

The type system for the sequential $\lambda$-calculus follows that of the $\kappa$-calculus~\cite{Power-Thielecke-1999}; similar type systems have also been studied for stack languages \cite{Stoddart-Knaggs-1993}. The type of a term describes the input/output behaviour of its machine evaluation: it consists of an implication between a vector of input types, one for each element consumed from the stack, and a vector of output types, one for each item returned to the stack.

\begin{definition}
\define{Sequential types} are defined by:
\[
	\type{r,s,t,u}~\Coloneqq~\type{!s>!t} 
\qquad\qquad
	\type{!t}~\Coloneqq~\type{t_1\dots t_n}
\]
Vector concatenation is by juxtaposition, $\type{!s!t}$, and the reverse of a vector  $\type{!t}=\type{t_1\dots t_n}$ is $\type{?t}=\type{t_n\dots t_1}$. Typing rules for the sequential $\lambda$-calculus are given below. A stack is typed by a type vector, where $\Gamma \vdash \e{\cdot}\term{M_1}{\cdots}\term{M_n}:\type{t_1\dots t_n}$ if $\term{G |- M_i : t_i}$ for each $i\leq n$.
\[
	\infer[\TR *]{\term{G |- *:?t>!t}}{}
\quad
	\infer[\TR x]
	 {\term{G , x:?r>!s |- x.M:?r\,?t>!u}}
	 {\term{G , x:?r>!s |- {\phantom{x.}}M:?s\,?t>!u}}
\quad
	\infer[\TR l]
	  {\term{G |- <x>.M : r\,?s>!t}}
	  {\term{G , x:r |- M:?s>!t}}
\quad
	\infer[\TR a]
	  {\term{G |- [N].M: ?s>!t}}
	  {\term{G |- N:r} &
	   \term{G |- M:r\,?s>!t}
	  }
\]
\end{definition}


\begin{example}
The terms in Example~\ref{ex:sequential terms} can be typed as follows.
\[
	\term{<x>.[x].[x] : t>tt} 
\qquad \term{<x>.<y> : ts>} 
\qquad \term{[<x>.[x]].<f>.f.f.f: t>t}
\]
Observe that because stacks are last-in first-out, the identity function on the top two stack items is the term $\term{<x>.<y>.[y].[x] : ts>st}$, whereas the function that swaps them is $\term{<x>.<y>.[x].[y] : ts>ts}$.
\end{example}

\begin{example}
\label{ex:self app}
The term $\term{lx.x\,x}=\term{<x>.[x].x}$ can be typed by assigning $\term x$ a type that does not consume input, i.e.\ of the form $\type{(>!t)}$. The self-application $\term{x\,x}=\term{[x].x}$ then has the type $\type{>(>!t)!t}$, which reflects that the return values accumulate: if evaluating $\term x$ returns the stack $T:\type{!t}$, then $\term{[x].x}$ returns the stack consisting of $\term x$ prepended to $T$. The type derivation is below. Note that the term $\term{(lx.xx)(ly.yy)}$ is not typeable: the argument $\term{ly.yy}$ needs a type that takes input, and then so should $\term x$.
\[
\infer[\TR l]{\vdash~\term{<x>.[x.*].x.* : (>!t)>(>!t)!t}}{
 \infer[\TR a]{\term{x:\!>!t}\;\vdash\;\term{[x.*].x.* : >(>!t)!t}}{
  \infer[\TR x]{\term{x:\!>!t}\;\vdash\;\term{x.*:\!>!t}}{
   \infer[\TR *]{\term{x:\!>!t}\;\vdash\;\term{*:?t>!t}}{}
  }
&\infer[\TR x]{\term{x:\!>!t}\;\vdash\;\term{x.*:(>!t)>(>!t)!t}}{
   \infer[\TR *]{\term{x:\!>!t}\;\vdash\;\term{*:?t(>!t)>(>!t)!t}}{}
}}}
\]
\end{example}

\begin{remark}[Due to Chris Barrett]
\label{rem:type inhabitation}
Observe that all sequential types $\type t$ are inhabited by at least the element $\term{\bot_\tau:t}$, defined below (note that the base case $n=m=0$ gives $\term{*:(>)}$). This is in contrast with the simply-typed $\lambda$-calculus, where not all types are inhabited due to the presence of the uninhabited base type $\type o$.
\[
	\term{\bot_\tau}=\term{<x_n>\dots <x_1>.[\bot_{\tau_1}]\dots[\bot_{\tau_m}]} \quad\text{where}\quad \type t=\type{s_n\dots s_1>t_1\dots t_m}
\]
\end{remark}


\subsection{Encodings of call--by--name calculi}
\label{sec:cbnencodings}

The (regular, call--by--name) $\lambda$-calculus is included as a fragment of the sequential $\lambda$-calculus. We extend this embedding with types and with products, and to Moggi's \emph{computational metalanguage}~\cite{Moggi-1991} following Douence and Fradet~\cite{Douence-Fradet-1998}. The main observation is that implications embed as input-only sequential types, and products as output-only types, below left. The formal, inductive encoding is then below right.
\[
\begin{array}{r@{}l@{\qquad\qquad\qquad}r@{}l@{\qquad}r@{}l}
	\type{t_1\imp\cdots\imp t_n\imp o}&~=~\type{t_1\dots t_n\,>}	& \type{o} &~\defeq~\type{(>)} & \type{r\imp(?s>!t)} &~\defeq~\type{r?s>!t}
\\
	\type{t_1\tim\cdots\tim t_n} &~=~ \type{>\,t_n\dots t_1}		& \type{1} &~\defeq~\type{(>)} & \type{s\tim t}&~\defeq~\type{>\,ts}
\end{array}
\]
Following the types, product terms encode as follows.
\[
	\term{(M,N)} ~\defeq~ \term{[N].[M]}
\qquad\qquad
	\term{\pi_i(P)} ~\defeq~ \term{P;<x_1>.<x_2>.x_i}
\qquad\qquad
	\term{()} ~\defeq~\term*
\]
The computational metalanguage extends the $\lambda$-calculus with monadic type formers $T(\sigma)$, and a \emph{return} construct $[M]_T$ and a \emph{let} construct $\ttlet_T$ parameterized in $T$. In the interpretation in the sequential $\lambda$-calculus, the return value of a monadic function is pushed to the stack. A monadic function type is then interpreted as one with a single output, as below left. The language constructs are encoded as below right. It is easily verified that this extends correctly to type derivations and reductions.
\[
	\type{t_1\imp\cdots\imp t_n\imp T(s)}~=~\type{t_1\dots t_n > s}
\qquad\qquad
	[M]_T~\defeq~\term{[M]}
\qquad	
	\ttlet_T~x\From N~\ttin~M ~\defeq~ \term{N;<x>.M}
\]


\subsection{Encodings of call--by--value calculi}
\label{sec:cbvencodings}

The $\cbv$ $\lambda$-calculus~\cite{Plotkin-1975} and the computational $\lambda$-calculus $\lambda_c$~\cite{Moggi-1989-LICS}, which extends the former with a monadic type constructor $T$ and with the term constructs \emph{return} and \emph{let}, have an encoding in the $\kappa$-calculus~\cite{Douence-Fradet-1998,Power-Thielecke-1999}. We recall this for the sequential $\lambda$-calculus, and observe that it naturally extends to types. The $\cbv$-interpretion of types is naturally viewed in two stages. First, types in isolation are translated as follows.
\[
	o_\val~=~\type{(>)}
\qquad
	(\sigma\imp\tau)_\val~=~\type{s_\val > t_\val}
\qquad
	T(\sigma)_\val~=~\type{>\,s_\val}
\]
Evaluation of a $\lambda_c$-term returns a value, which in the encoding is pushed to the stack. A typed term $M:\tau$ will then translate as $\term{M_\val:\,>\,t_\val}$, with a single output type. Terms of the computational $\lambda$-calculus are then interpreted as below. Again it is easily verified that this extends correctly to type derivations and reduction. The machine behaviour of encoded terms can be seen to follow the SECD-machine~\cite{Landin-1964}.
\[
	x_\val 				~=~ \term{[x]}
\qquad	(\lambda x.M)_\val	~=~ \term{[<x>.M_\val]}
\qquad	([M]_T)_\val		~=~ \term{[M_\val]}
\]
\[
	(M\,N)_\val 		~=~ \term{N_\val;M_\val;<x>.x}
\qquad	(\ttlet_T~x\From N~\ttin~M)_\val ~=~ \term{N_\val;<x>.M_\val}
\]


\subsection{Arrows, call--by--push--value, and kappa-calculus}

The sequential $\lambda$-calculus may encode the related formalisms of Arrows, $\cbpv$, and $\kappa$-calculus. For reasons of space, we will not recall these calculi in detail and only provide an outline to the interested reader.

Hughes's \emph{Arrows}~\cite{Hughes-2000} take the $\lambda$-calculus with products and extend it with a second implication $\sigma\arrimp\tau$, which we interpret directly as that of the sequential $\lambda$-calculus, $\type{s>t}$.
Arrow terms have three constructors, encoded as below. The first lifts a regular term $M{:}\,\rho\imp\sigma$ to an arrow term; the second composes two arrow terms $P\,{:}\,\rho\arrimp\sigma$ and $Q\,{:}\,\sigma\arrimp\tau$; and the third applies the arrow term $P$ to the first element of a pair.
\[
\begin{array}{@{}r@{~}c@{~}lcr@{~}c@{~}l@{}}
	\arr~M &:& \rho\arrimp\sigma &\defeq& \term{<x>.[[x].M]} &:& \type{r>s}
\\	P\ac Q &:& \rho\arrimp\tau   &\defeq& \term{P;Q} &:& \type{r>t}
\\ \first~P &:& (\rho\tim\tau)\arrimp(\sigma\tim\tau) &\defeq& \term{<x>.[x.P]} &:& \type{(>tr)>(>ts)}
\end{array}
\]
The perspective that emerges from this encoding is that the arrow calculus corresponds to a version of the sequential $\lambda$-calculus with binary products instead of stacks (which may be considered $n$-ary products).

Characteristic of $\cbpv$~\cite{Levy-2003,Levy-2006}, and also featured in $\kappa$-calculus, is the separation of \emph{computations} and \emph{values}. In the sequential $\lambda$-calculus this distinction is present, too, if implicitly: values live on the stack, and computations run the machine. To make it explicit, we may extend the calculus with \emph{thunk} and \emph{force} constructs $\term{`!M}$ and $\term{`?V}$, and their reduction rule, as below left. Term constructs of $\cbpv$ (without products or sums) then embed as below right.
\[
\begin{array}{r@{~}c@{~}l}
	    \term{V,W} & ~\Coloneqq~ & \term x ~\mid~ \term{`!M}
\\		\term{M,N} & ~\Coloneqq~ & \term * ~\mid~ \term{`?V.M} ~\mid~ \term{[V].M} ~\mid~ \term{<x>.M}
\\		\term{`!`?N.M} &~\rw~ & \term{N;M}
\end{array}
\qquad\qquad
\begin{array}{rclrcl}
     \ttthunk~M & \defeq & \term{`!M}   &   \ttforce~V & \defeq & \term{`?V}
\\	   \ttret~V & \defeq & \term{[V]}   & N~\ttto~x.~M & \defeq & \term{N;<x>.M} 
\\  \lambda x.M & \defeq & \term{<x>.M} &          V`M & \defeq & \term{[V].M}
\end{array}
\]
Types for $\cbpv$ feature a monadic functor $F$ and a \emph{value} functor $U$. The latter could be introduced into sequential types as below left, though the structure of the arrow type $\type{>}$ makes it redundant. Types then further encode as call--by--name types, below right.
\[
	\type{s,t}~\Coloneqq~\type{!s>!t} \qquad \type{!t}~\Coloneqq~\type{Ut_1\dots Ut_n}
\qquad\qquad
	\rho\imp(\typ{?s>!t})~\defeq~\type{r?s>!t} \qquad F\sigma~\defeq~\type{>s}
\]

The higher-order $\kappa$-calculus~\cite{Power-Thielecke-1999} is closely related to the sequential $\lambda$-calculus. Types are the same as sequential types: an implication between type vectors. Terms omit the unit $\term *$ and have composition $M;N$ as a primitive (rather than prefixing). The remaining constructs encode as follows.
\[
	     \texttt{push}~V    ~\defeq~ \term{[V]} 
\qquad   \kappa x.M         ~\defeq~ \term{<x>.M}
\qquad	 \texttt{mkthunk}~M ~\defeq~ \term{`!M}
\qquad   \texttt{apply}     ~\defeq~ \term{<x>.`?x}
\]


\subsection{String diagrams}

We may view typed sequential $\lambda$-terms as string diagrams. A term $\term{M : r_1\dots r_m>s_n\dots s_1}$ is rendered as below. The wires represent the input and output stacks, with the first element at the top.
\[
\vc{
\begin{tikzpicture}[termpic]
	\draw
	  (-28,16)node[left]{$\type{r_1}\,$}--(28,16)node[right]{$\,\type{s_1}$}
	  (-22, 4)node[left]{$\type{r_m}$}--(22, 4)node[right]{$\,\type{s_n}$};
	\diagdots[-1.5,3]{-20,10}
	\diagdots[ 1.5,3]{ 20,10}
	\node[termbox] at (0,10) {$\term M$}; 
\end{tikzpicture}}
\]
We will use these diagrams to illustrate how types compose. First, \emph{strict composition} is the composition of terms $\term{M:?r>!s}$ and $\term{N:?s>!t}$ into $\term{M;N:?r>!t}$, below left. This does not give the most general form of composition. For that, we combine it with the following notion of \emph{expansion}. If a term takes an input stack $R$ to an output stack $S$, then when given a larger stack $UR$ it returns $US$, with $U$ untouched. Then if $\term{M:?r>!s}$ also $\term{M:?r?u>!u!s}$, illustrated below right.
\[
\vc{\begin{tikzpicture}[termpic]
	\wires{?r}{-50}{20}{50}{!t}
	\node[termbox] at (-20,20) {$\term M$};
	\node[termbox] at (20,20) {$\term N$};
	\diagdots[0,3]{0,20}
\end{tikzpicture}}
\qquad\qquad
\vc{\begin{tikzpicture}[termpic]
	\wires{?r}{-30}{20}{30}{!s}
	\wires{?u\,}{-20}{0}{20}{!u}
	\node[termbox] at (0,20) {$\term M$}; 
\end{tikzpicture}}
\]
These constructions combine to give the general case, where in $\term{M;N}$ the type of $\term{M}$ or $\term{N}$ may be expanded.
Note that in the regular $\lambda$-calculus only the first case arises, as the second case requires multiple outputs.
\[
\vc{\begin{tikzpicture}[termpic]
	\wires{?r\,}{-50}{20}{50}{!t}
	\wiresleft{!u}{-40}{0}{20}
	\node[termbox] at (-20,20) {$\term M$};
	\node[tallbox] at ( 20,10) {$\term N$};
	\diagdots[0,3]{0,20}
\end{tikzpicture}}
\qquad\qquad
\vc{\begin{tikzpicture}[termpic]
	\wires{?r\,}{-50}{20}{50}{!t}
	\wiresright{-20}{0}{40}{!u}
	\node[tallbox] at (-20,10) {$\term M$};
	\node[termbox] at ( 20,20) {$\term N$};
	\diagdots[0,3]{0,20}
\end{tikzpicture}}
\] 

\begin{definition}
\label{def:type composition}
\emph{Type composition} $\type{s.t}$ is the partial operation given below, and is undefined otherwise.
\[
\begin{aligned}
	\type{(?r>!s)~.~(?s\,?u>!t)} &~~=~~\type{(?r\,?u>!t)}
\\[2pt]
	\type{(?r>!u\,!s)~.~(?s>!t)} &~~=~~\type{(?r>!u\,!t)}
\end{aligned}
\]
\end{definition}

Observe that the first equation in Definition~\ref{def:type composition} corresponds to the left composition diagram above, and the second equation to the right diagram, where in both cases $\type{!s}$ gives the types of the connecting wires between $\term M$ and $\term N$. The following proposition establishes these basic properties, as well as the familiar \emph{subject reduction}.

\begin{proposition}
\label{prop:type-properties}
Typed terms satisfy the following properties:
\begin{itemize}

	\item Strict composition:
if $~\term{G |- M:?r>!s}$ and $\term{G |- N:?s>!t}$ then $\term{G |- M;N:?r>!t}$.

	\item Expansion:
if $~\term{G |- M:?r>!s}$ then $\term{G |- M:?r\,?u>!u\,!s}$.

	\item Composition:
if $~\term{G |- M : s}$ and $\term{G |- N : t}$ and $\type{s.t}$ is defined, then $\term{G |- M;N : s.t}$.

	\item Substitution:
if $~\term{G |- M:s}$ and $\term{G , x:s |- N:t}$ then $\term{G |- \{M/x\}N:t}$.

	\item Subject reduction:
if $~\term{G |- N:t}$ and $\term N\rw\term M$ then $\term{G |- M:t}$.

\end{itemize}
\end{proposition}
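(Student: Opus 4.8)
The plan is to prove the five properties in dependency order: expansion and strict composition first, each by a direct induction; then general composition as a corollary of these two; then substitution using composition; and finally subject reduction using substitution. Expansion I would prove by induction on the typing derivation of $\term{M:?r>!s}$, showing at each step that appending $\type{?u}$ to the input vector and prepending $\type{!u}$ to the output vector keeps the derivation valid. The point is that every rule acts only on a bounded prefix of the vectors: for $\TR*$ the claim $\term{M:?r?u>!u!s}$ (with $\type r=\type s$) is itself an instance of $\TR*$ on the combined vector, since $\type{?r?u}$ is precisely the reverse of $\type{!u!s}$; and for $\TR a$, $\TR l$, $\TR x$ one applies the induction hypothesis to the premise, expanding its type by the same $\type{?u}/\type{!u}$, and then re-applies the rule, which only touches the head of the vector.

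Strict composition I would prove by induction on the structure of $\term M$, following the inductive definition of composition. The base case $\term M=\term*$ forces $\type r=\type s$ and $\term{*;N}=\term N$, which already has the required type. In each inductive case ($\term{[P].M'}$, $\term{<y>.M'}$, $\term{y.M'}$) composition pushes into the tail, $\term{M;N}=\term c.(\term{M'};\term N)$ for the appropriate constructor $\term c$, so the induction hypothesis types $\term{M';N}$ and the outermost rule ($\TR a$, $\TR l$, $\TR x$ respectively) rebuilds the constructor; the only bookkeeping is that the connecting vector $\type{!s}$, which is the output of $\term M$ and appears reversed as the input $\type{?s}$ of $\term N$, is matched on the same underlying vector.

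General composition then follows by a two-case analysis matching Definition~\ref{def:type composition}. For $\type{(?r>!s).(?s?u>!t)=(?r?u>!t)}$ I would expand $\term M$ to $\term{M:?r?u>!u!s}$ and observe that $\type{?s?u}$ is the reverse of $\type{!u!s}$, so $\term{N:?s?u>!t}$ composes strictly with the expanded $\term M$ to give $\term{M;N:?r?u>!t}$; the other case is symmetric, expanding $\term N$ instead. Substitution I would then prove by induction on $\term N$, where every case is routine (using the induction hypothesis with the corresponding rule, and weakening of $\term G$ by $\term y$ to pass the binders $\term{<y>.N'}$ and $\term{y.N'}$) except $\term N=\term{x.N'}$: there $\term{\{M/x\}N}=\term{M;\{M/x\}N'}$, and if $\term x$ has type $\type{?r>!s}$ then inverting $\TR x$ gives $\term{N':?s?t>!u}$ and type $\type{?r?t>!u}$ for $\term N$; the induction hypothesis yields $\term{\{M/x\}N':?s?t>!u}$, and the first case of general composition, applied to $\term{M:?r>!s}$, delivers exactly $\type{?r?t>!u}$. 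Subject reduction finally reduces to the head redex $\term{[P].<x>.M'}\rw\term{\{P/x\}M'}$: inverting $\TR a$ and then $\TR l$ exposes $\term{G,x:r|-M':?s>!t}$ together with $\term{P:r}$, and the substitution property closes it; contextual closure is a straightforward induction on the reduction context, since each rule's premises are independent, so replacing a subterm by one of equal type leaves the rule instance intact.

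I expect the main obstacle to be notational rather than conceptual: keeping the vector-reversal conventions straight so that the output $\type{!s}$ of one term and the input $\type{?s}$ of the next are genuinely matched on the same underlying vector, and correctly aligning the two cases of type composition with the two directions of expansion. The rule $\TR x$, which threads the variable's own type $\type{?r>!s}$ through the judgement, is the subtlest point, and the case $\term N=\term{x.N'}$ of substitution is exactly where this must mesh with general composition; getting these two to agree is the crux of the argument.
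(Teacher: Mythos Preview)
Your proposal is correct and is the standard approach; the paper in fact omits the proof entirely as straightforward, and what you outline is exactly the expected argument. The dependency order (expansion and strict composition by direct induction, then general composition from these, then substitution using general composition in the $\term{x.N'}$ case, then subject reduction from substitution) is right, and your handling of the $\TR x$ rule in the variable case of substitution via the first clause of Definition~\ref{def:type composition} is the key observation that makes the whole thing go through.
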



\subsection{Machine termination}

A remarkable aspect of the type system is how it gives a direct connection with termination of the machine. To expose this, we formalize the intuitive meaning of types as describing the initial and final stack of a run of the machine.

\begin{definition}
The set $\run{?s>!t}$ is the set of terms $\term{M}$ such that for any stack $S\in\run{!s}$ there is a stack $T\in\run{!t}$ and a run of the machine $\machine*SMT*$, where $\run{t_1\dots t_n}$ is the set of stacks $\e\cdot\term{N_1}\cdots\term{N_n}$ such that $\term{N_i}\in\run{t_i}$.
\end{definition}

Note that a successful run requires the term to be closed, so a set $\run t$ contains only closed terms. The following lemma shows that $\run t$ is always inhabited by the term $\term{\bot_\tau}$ (see Remark~\ref{rem:type inhabitation}).

\begin{lemma}
For any type $\type t$ the set $\run t$ is inhabited: $\term{\bot_\tau}\in\run t$.
\end{lemma}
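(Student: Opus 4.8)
The plan is to prove the lemma by structural induction on the type $\type t$. Following Remark~\ref{rem:type inhabitation} I write $\type t=\type{s_n\dots s_1>t_1\dots t_m}$, so that $\term{\bot_\tau}=\term{<x_n>\dots<x_1>.[\bot_{\tau_1}]\dots[\bot_{\tau_m}]}$ and each output type $\type{t_i}$ is a strict subexpression of $\type t$. The induction hypothesis then supplies $\term{\bot_{\tau_i}}\in\run{t_i}$ for every $i\leq m$; these are the only recursive calls I need, since the argument never has to inhabit the input types $\type{s_j}$ (the input stack is given, not constructed).

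To carry out the induction step I would first unfold the definition of $\run{\cdot}$. Matching $\type t$ against the pattern $\type{?s>!t}$ forces $\type{?s}=\type{s_n\dots s_1}$ and $\type{!t}=\type{t_1\dots t_m}$, hence $\type{!s}=\type{s_1\dots s_n}$, so I must exhibit, for every input stack $S=\e\cdot\term{N_1}\cdots\term{N_n}\in\run{!s}$, an output stack $T\in\run{!t}$ together with a run $\machine*SMT*$. I then simply trace the sequential machine on $\term{\bot_\tau}$: the $n$ leading abstractions $\term{<x_n>\dots<x_1>}$ pop $\term{N_n},\dots,\term{N_1}$ in turn and leave the empty stack, the $m$ push actions deposit $\term{\bot_{\tau_1}},\dots,\term{\bot_{\tau_m}}$, and the continuation becomes $\term*$, at which point the machine halts with $T=\e\cdot\term{\bot_{\tau_1}}\cdots\term{\bot_{\tau_m}}$. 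By the induction hypothesis $\term{\bot_{\tau_i}}\in\run{t_i}$, so $T\in\run{t_1\dots t_m}=\run{!t}$, which gives $\term{\bot_\tau}\in\run t$. The base case $n=m=0$ is $\term*$ halting immediately on the empty stack.

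Two points make the trace go through, and the second is the only genuine subtlety. First, each $\term{\bot_{\tau_i}}$ is built purely from its type and is therefore closed, so the bound variables $x_j$ do not occur in the body; consequently the popped values are discarded, the substitutions in the pop transitions are identities, and---since $\term{\bot_\tau}$ is itself closed---the run is legitimate (recall that $\run t$ contains only closed terms). Second, I would take care to reconcile the two orientation conventions: the abstractions pop in the order $x_n,\dots,x_1$, matching the consumption-order reading of $\type{?s}$, whereas $\run{!s}$ describes the input stack in physical bottom-to-top order, and dually on the output side. Once these reversals are lined up the run is entirely forced, so this bookkeeping---rather than any deeper confluence or typing argument---is where the care is needed.
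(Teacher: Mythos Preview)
Your argument is correct and is exactly the straightforward induction one would expect; the paper in fact omits the proof of this lemma (per its convention that straightforward proofs are left out), so there is nothing further to compare against. Your handling of the orientation conventions and the observation that only the output component types $\type{t_i}$ require the induction hypothesis are both accurate.
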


If $\term{M:t}$ implies $\term M\in\run t$, then a type derivation \emph{is} a termination proof of the machine. This is Theorem~\ref{thm:run}, and proving it gives a concrete \emph{Tait-style} reducibility proof~\cite{Tait-1967}, where $\run{t}$ takes the r\^ole of the reducibility set for~$\type t$. By using the properties of machine runs the proof is then a simple, direct induction on type derivations.

Vector notation is extended to variables, $\term{!x}=\term{x_1\dots x_n}$, to contexts as $\term{!x:!t}=\term{x_1:t_1,,x_n:t_n}=\Gamma$, and to simultaneous substitutions: if $S=\e\cdot\term{M_1}\cdots\term{M_n}$ then $\term{\{S/{}!x\}}=\term{\{M_1/x_1`,\dots`,M_n/x_n\}}$. We write concatenation of stacks $S$ and $T$ by juxtaposition, $S\,T$.

\begin{lemma}
\label{lem:run}
If $~\term{!w:!w |- M:t}$ then for any $\term{W}\in\run{!w}$, $\term{\{W/{}!w\,\}M}\in\run{t}$.
\end{lemma}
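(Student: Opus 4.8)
The plan is to induct on the typing derivation of $\term{!w:!w |- M:t}$, combining the two machine transitions with the composition property for runs (the unnumbered proposition stating that runs of $\term M$ and $\term N$ splice into a run of $\term{M;N}$) and with a \emph{frame} lemma for the machine that I isolate first. Since every component of $\term W$ is closed, the simultaneous substitution behaves as an iterated single substitution; in particular, when $\term x$ is the variable of $\term{!w}$ with realizer $\term{W_x}\in\run{?r>!s}$, the clause $\term{\{M/x\}x.N}=\term{M;\{M/x\}N}$ gives $\term{\{W/{}!w\,\}(x.M)}=\term{W_x;\{W/{}!w\,\}M}$.

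First I would prove the frame lemma: \emph{if $\machine*RMS*$ then $\machine*{VR}M{VS}*$ for every stack $V$}. Both transitions only read or write the topmost stack element, so prepending a fixed $V$ to the initial stack and carrying it through the run preserves each step: every pop that succeeds on a configuration with stack $R'$ succeeds identically on $VR'$, which is only longer, and the terminal configuration becomes $(VS,\term*)$. This is a routine induction on the length of the run, and is the run-level counterpart of the Expansion clause of Proposition~\ref{prop:type-properties}.

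For the induction, the skip rule is immediate: $\term{\{W/{}!w\,\}*}=\term*$, and the empty run witnesses $\term*\in\run{?t>!t}$. For the push rule, writing $\term{N'}=\term{\{W/{}!w\,\}N}$ and $\term{M'}=\term{\{W/{}!w\,\}M}$, the hypotheses give $\term{N'}\in\run r$ and $\term{M'}\in\run{r?s>!t}$; given $S\in\run{!s}$, the push step takes $(S,\term{[N'].M'})$ to $(S{\cdot}\term{N'},\term{M'})$, and since $S{\cdot}\term{N'}$ lies in the input set of $\term{M'}$ the hypothesis completes the run to some $T\in\run{!t}$. For the pop rule, the hypothesis for the premise—whose context is $\term{!w}$ extended by $\term{x:r}$, with realizer $\term W{\cdot}\term N$ for any $\term N\in\run r$—yields $\term{\{N/x\}M'}\in\run{?s>!t}$; an input stack $S{\cdot}\term N$ with $\term N\in\run r$ and $S\in\run{!s}$ pops to $(S,\term{\{N/x\}M'})$, which the hypothesis finishes. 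The variable-prefix rule is where the frame lemma is used: with $\term{M'}=\term{\{W/{}!w\,\}M}\in\run{?s?t>!u}$ and $\term{W_x}\in\run{?r>!s}$, an input stack of $\term{W_x;M'}$ has the form $V{\cdot}R$ with $R\in\run{!r}$ on top and $V\in\run{!t}$ below; running $\term{W_x}$ on $R$ gives $\machine*R{W_x}{S'}*$ with $S'\in\run{!s}$, the frame lemma lifts this to $\machine*{V{\cdot}R}{W_x}{V{\cdot}S'}*$, and since $V{\cdot}S'\in\run{!t!s}$ is an input stack of $\term{M'}$ the hypothesis gives a run to some $U\in\run{!u}$; the composition property then assembles $\machine*{V{\cdot}R}{W_x;M'}U*$.

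The main obstacle is precisely this variable case, and within it the frame lemma. The definition of $\run{?r>!s}$ constrains only runs begun on stacks in $\run{!r}$, whereas to splice $\term{W_x}$ before $\term{M'}$ I must run $\term{W_x}$ on the larger stack $V{\cdot}R$ while leaving the lower segment $V$ untouched; the frame lemma supplies exactly this, reducing the case to the composition property for runs.
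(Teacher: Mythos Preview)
Your proposal is correct and follows essentially the same route as the paper: induction on the typing derivation, with the variable case handled by expanding the run of the realizer for $x$ to a larger stack and then composing with the run obtained from the inductive hypothesis. The paper does exactly this, writing only ``expanding the stack on the run for $\term N$ by $T$'' where you isolate and prove the frame lemma explicitly; your making it a separate statement is a mild improvement in rigour but not a different idea.
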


\begin{proof}
By induction on the type derivation. In each case, let $\Gamma=\term{!w:!w}$, let $W$ be a stack in $\run{!w}$, and let $\term{M'}=\term{\{W/{}!w\}M}$.

\begin{itemize}

	\item 
If the derivation is a $\tr*$-rule for $\term{G |- *:?t>!t}$ then there is a trivial zero-step run $\machine*T*T*$.

	\item
If the derivation ends in the variable rule below left, then for any $\term{N}\in\run{!r>!s}$, the inductive hypothesis gives a run for $\term{\{N/x\}M'}$ from any $TS\in\run{!t\,!s}$ to some $U\in\run{!u}$ (second item below). For $\term N$ there is a run from any $R\in\run{!r}$ to some $S\in\run{!s}$ (third item below). These runs compose into one for $\term{\{W/{}!w`,N/x\}x.M}=\term{N;\{N/x\}M'}$ as below right, expanding the stack on the run for $\term N$ by $T$. Note that we may assume $\term x\notin\fv{W}$ (otherwise we rename $\term x$).
\[
	\vc{\infer[\TR x]
	 {\term{G , x:?r>!s |- x.M:?r\,?t>!u}}
	 {\term{G , x:?r>!s |- {\phantom{x.}}M:?s\,?t>!u}}
	}
\qquad	\vc{\machine {TS}{\{N/x\}M'}U*}
\qquad	\vc{\machine RNS*}
\qquad
    \vc{\begin{array}{@{(~}l@{~,~}r@{~)}}
      TR & \term{N;\{N/x\}M'}
    \\\hline\hline\rule[-5pt]{0pt}{15pt}
      TS &   \term{\{N/x\}M'}
    \\\hline\hline\rule[-5pt]{0pt}{15pt}
      U  &           \term*\;
    \end{array}}
\]

	\item
If the derivation ends in the application rule below left, then by the inductive hypothesis for $\term N$ we have $\term{N'=\{W/{}!w\}N}\in\run{r}$, and for $\term M$ we have a run from $\term{M'}$ and any stack $S\in\run{!s}$ with $\term{N'}$ added on top, to some $T\in\run{!t}$. This gives the run for $\term{\{W/{}!w\}[N].M}=\term{[N'].M'}$ below right.
\[
	\vc{\infer[\TR a]
	  {\term{G |- [N].M: ?s>!t}}
	  {\term{G |- N:r} &&
	   \term{G |- M:r\,?s>!t}
	  }}
\qquad\qquad
    \vc{\begin{array}{@{(~}l@{~,~}r@{~)}}
      S               & \term{[N'].M'} 
    \\\hline\rule[-5pt]{0pt}{15pt}
      S\cdot\term{N'} &      \term{M'}
    \\\hline\hline\rule[-5pt]{0pt}{15pt}
      T               &     \term{*}\;
    \end{array}}
\]

	\item
If the derivation ends in the abstraction rule below left, then for any $\term{N}\in\run{r}$ and $S\in\run{!s}$ the inductive hypothesis gives a run for $\term{\{N/x\}M'}$ to some $T\in\run{!t}$. This gives the run for $\term{\{W/{}!w\}<x>.M}=\term{<x>.M'}$ below right.
\[
	\vc{\infer[\TR l]
	  {\term{G |- <x>.M : r\,?s>!t}}
	  {\term{G , x:r |- M:?s>!t}}
	}
\qquad\qquad
    \vc{\begin{array}{@{(~}l@{~,~}r@{~)}}
      S\cdot \term{N} &    \term{<x>.M'}
    \\\hline\rule[-5pt]{0pt}{15pt}
      S               & \term{\{N/x\}M'}
    \\\hline\hline\rule[-5pt]{0pt}{15pt}
      T               &       \term{*}\;
    \end{array}}
\]
\end{itemize}
\end{proof}

\noindent
The following theorem is then immediate.

\begin{theorem}
\label{thm:run}
For a typed, closed term $\term{M:?s>!t}$ and stack $S:\type{!s}$ the machine terminates.
\end{theorem}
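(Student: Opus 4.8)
The plan is to derive the theorem as an immediate corollary of Lemma~\ref{lem:run}, specialised to the empty context. First I would observe that for a closed term the typing context $\Gamma$ is empty, so the variable vector $\term{!w}$ is empty and the unique stack in $\run{!w}$ is $\e$, under which the simultaneous substitution $\term{\{W/{}!w\}M}$ collapses to $\term M$ itself. Instantiating Lemma~\ref{lem:run} at this empty context with the hypothesis $\term{M:?s>!t}$ then yields $\term M\in\run{?s>!t}$ directly, with no further induction required.

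Next I would unfold the definition of $\run{?s>!t}$: by definition this set consists of exactly those terms $\term M$ such that for every stack $S\in\run{!s}$ there is a stack $T\in\run{!t}$ together with a machine run $\machine*{S}{M}{T}{*}$. The existence of such a run, reaching the terminal state $\term*$, is precisely the assertion that the machine terminates on $(S,\term M)$. Thus once $\term M\in\run{?s>!t}$ is in hand, termination follows for every input stack drawn from $\run{!s}$.

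The one point needing a little care is bridging the gap between a stack that is merely \emph{well-typed}, $S:\type{!s}$, and one lying in the reducibility set, $S\in\run{!s}$. Writing $S=\e\cdot\term{N_1}\cdots\term{N_n}$ with $\term{N_i:t_i}$, I would apply Lemma~\ref{lem:run} once more at the empty context to each closed component $\term{N_i}$, obtaining $\term{N_i}\in\run{t_i}$ and hence $S\in\run{!s}$ by definition of $\run{!s}$. This tacitly uses that the input stack consists of closed terms, so that the whole configuration is closed; I would state that assumption explicitly. I do not anticipate a genuine obstacle here, since the substantive work, the Tait-style reducibility argument showing that typability entails membership in $\run{\cdot}$, is already discharged by Lemma~\ref{lem:run}; the theorem is a direct consequence, its only subtlety being this routine closure of the typed input stack under the same lemma.
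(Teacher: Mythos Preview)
Your proposal is correct and matches the paper's approach: the paper simply states that the theorem is ``immediate'' from Lemma~\ref{lem:run}, and you have correctly identified both the instantiation at the empty context and the one point the paper glosses over, namely passing from a typed stack $S:\type{!s}$ to $S\in\run{!s}$ by applying Lemma~\ref{lem:run} componentwise to the (closed) stack entries. Your explicit treatment of that bridge step is more careful than the paper's one-word proof, but the underlying argument is the same.
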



\section{The functional machine calculus}
\label{sec:FMC}

The combination of both generalizations, \emph{locations} and \emph{sequencing}, gives the Functional Machine Calculus.

\begin{definition}
\label{def:FMC}
The \define{Functional Machine Calculus} (\FMC) is given by the below grammar, with from left to right the constructors \emph{nil}, a \emph{(sequential) variable}, an \emph{application} or \emph{push action} on the location $\term a$, and an \emph{abstraction} or \emph{pop action} on $\term a$ which binds $\term x$ in $\term M$. Terms are considered modulo $\alpha$-equivalence. 
\[
\term{M,N,P}
  \quad\Coloneqq\quad \term *
       ~\mid~ \term{x.M}
       ~\mid~ \term{[N]a.M}
       ~\mid~ \term{a<x>.M}
\]
\end{definition}

Composition $\term{N;M}$ and substitution $\term{\{N/x\}M}$ are as for the sequential $\lambda$-calculus. The machine is as for the poly-$\lambda$-calculus: a \emph{state} is a pair $(S_A,\term M)$ of a memory and a term, and the \emph{transitions} are:
\[
\begin{array}{@{(~ }l@{~,~}r@{~)}}
	S_A~;~S_a 	         & \term{[N]a.M}
\\\hline
	S_A~;~S_a{\cdot}\term N &      \term M
\end{array}
\qquad
\begin{array}{@{(~}l@{~,~}r@{~)}}
	S_A~;~S_a{\cdot}\term N &   \term{a<x>.M}
\\\hline
	S_A~;~S_a               & \term{\{N/x\}M}
\end{array}
\]
Beta-reduction is as for the poly-$\lambda$-calculus: a redex consists of a successive application and abstraction \emph{on the same location}, separated by any number of actions on other locations. We will now make this formal.

\define{Head contexts} $\term H$ are defined as below left. The term obtained by replacing the hole $\term{\{\}}$ in $\term{H}$ with $\term M$ is denoted $\term{H.M}$, where a binder $\term{a<x>}$ in $\term{H}$ captures in $\term M$. The \emph{binding variables} $\bv{\term H}$ of $\term H$ are those variables $\term x$ where $\term H$ is constructed over $\term{a<x>}$. The set of \define{locations} used in a term or context is denoted $\loc M$ respectively $\loc H$. Then 
\define{Beta-reduction} is defined by the rewrite rule schema below right, where $\term a\notin\loc H$ and $\bv{\term H}\cap\fv{\term N}=\varnothing$, and is closed under all contexts.
\[
	\term{H} ~\Coloneqq~ \term{\{\}}~\mid~\term{[M]a.H}~\mid~\term{a<x>.H}
\qquad\qquad
	\term{[N]a.H.a<x>.M}~\rw~\term{H.\{N/x\}M}
\]
We will first consider the untyped calculus. We give the $\cbv$ encoding of effects and provide an intuition for programming in the FMC, then establish confluence and connect machine evaluation to $\beta$-reduction. We then consider simple types. Constants will be used informally, in examples.


\subsection{Call--by--value with effects}

We extend the encoding $(-)_\val$ of the computational $\lambda$-calculus of Section~\ref{sec:cbvencodings} to effects as follows. (The case for $N\oplus M$ is the same as for $\cbn$, as it expects a Church boolean for $\term x$.)
\[
\begin{aligned}
            \rd_\val &~=~ \term{\inp<x>.[x]}
\\   (\wrt N;M)_\val &~=~ \term{N_\val;<x>.[x]\out.M_\val}
\\   (c := N;M)_\val &~=~ \term{N_\val ; <x>.c<\_>.[x]c.M_\val}
\\	         !c_\val &~=~ \term{c<x>.[x]c.[x]}
\\	(N\oplus M)_\val &~=~ \term{\rnd<x>.[M_\val].[N_\val].x}
\end{aligned}
\]

\begin{example}
The term from Example~\ref{ex:operational}, below, is given a $\cbv$ interpretation as follows.
\[
	\trm{"1a := 2"0 ; ("2 lx "5. `!a)\,"0("3a := 3"0 ;"4 0"0)}~\rw_\cbv~\trm{"33}
\]
 Integers are values, and the translation will use $i_\val=\term{[i]}$. An update with an integer then simplifies by:
\[
\begin{array}{r@{}r@{}r}
	(\term{a := i;M})_\val &~=~ \uterm{[i].<x>}\term{.a<\_>.[x]a;M} &~\rw~ \term{a<\_>.[i]a;M}
\end{array}
\]
The $\cbv$-translated term, after applying this reduction to the two updates, further reduces as follows.
\[
	\trm{"1a<\_>."1[2]a "0 . "3 a<\_>"3.[3]a "0 . "4[0]"0 . ["2<x>. "5 a<y>.[y]a.[y]"0] . <z> . z}
	~\rws~ 
	\trm{"1a<\_>"0."5 ["33"5]a.["33"5]}
\]
\end{example}

\begin{example}
The term from Example~\ref{ex:cbv problem},
\[
	\trm{"1a:= "2(lx. "3b:=1;"2 x)\,"10;"4`!b} ~\rws_\cbv~ \trm{"31}
\]
translates and reduces as follows (with the same shortcut for update as above).
\[
\begin{array}{@{}l@{}}
	\trm{"1[0].}\utrm{"2[<x>."3b<\_>.[1]b"0."2[x]].<y>}\trm{"2.y."1<z>.a<\_>.[z]a."4b<u>.[u]b.[u]}
\\ \quad\rw~  \utrm{"1[0]."2<x>}\trm{."3b<\_>."3[1]b}.\,\utrm{"2[x]."1<z>}\trm{"1.a<\_>.[z]a."4b<u>.[u]b.[u]}
\\ \quad\rws~ \trm{"3b<\_>.}\utrm{"3[1]b}\trm{."1a<\_>.[0]a"0.}\utrm{"4b<u>}\trm{"4.[u]b.[u]}
\\ \quad\rw ~ \trm{"3b<\_>"0."1a<\_>.[0]a"0."4["31"4]b.["31"4]}
\\ \quad\sim~ \trm{"1a:=0;"{3!50!4}b:=1;1} 
\end{array}
\]
\end{example}


\subsection{Programming in the FMC}

As in the sequential $\lambda$-calculus, programming in the FMC naturally follows the concatenative paradigm. A term $\term M$ is viewed as a function taking an input memory $S_A$ to an output memory $T_A$ by a run of the machine $(S_A,\term M)\eval(T_A,\term *)$. Functions standardly operate on the main stack $\term l$, and it is then natural to consider effect operators that transfer values between the main stack and other locations, as the $\cbv$ translations of effect operators do. We introduce the following operations for input and output, a random generator, and a memory cell $\term c$. We further add \emph{definitions} or \emph{let}, as a redex.
\[
\begin{array}{r@{~}l}
	\term{\print} &~\defeq~\term{<x>.[x]\out}
\\	\term{\rd}    &~\defeq~\term{\inp<x>.[x]}
\\	\term{\rand}  &~\defeq~\term{\rnd<x>.[x]}
\end{array}
\qquad
\begin{array}{r@{~}l}
	\term{\get c} &~\defeq~\term{c<x>.[x]c.[x]}
\\	\term{\set c} &~\defeq~\term{<x>.c<\_>.[x]c}
\\  \term{(x=N);M}  &~\defeq~\term{[N].<x>.M}
\end{array}
\]
Constant operations such as the conditional $\term\ifthen$ pop the required number of items from the main stack, and reinstate their result, as is standard for stack languages. For example:
\[
\begin{array}{c@{\qquad\quad}c}
	\step{S_A;S_\lambda\cdot\term 2\cdot\term 3}{+.M}{S_A;S_\lambda\cdot\term 5}M
&	\step{S_A;S_\lambda\cdot\term P\cdot\term N\cdot\term\bot}{\ifthen.M}{S_A;S_\lambda\cdot\term P}M
\end{array}
\]
The FMC then operates similarly to a stack calculus for arithmetic: an expression $1 + ((2 + 3) \times 4)$ is given as a term $\term{[4].[3].[2].+.\times.[1].+}$ which indeed returns $21$. This results in an imperative programming style similar to Haskell's \emph{do}-notation, with the difference that terms may have \emph{any} number of return values, and consume any number of previously returned values.

\begin{example}
\label{ex:ff}
Consider the following example, where $\term\rnd$ is taken to randomly sample natural numbers.
\[
	\trm{("1f~=~"2\rand"0~;~"3\set c"0~;~"4\get c"0)~;~"1f"0~;~"1f"0~;~"5+"0~;~"6\print}
\]
The term assigns $\term f$ to be the function that draws a random number, stores it in cell $\term c$, and reads the value at $\term c$ again as its return value. It then executes $\term f$ twice, sums the results, and sends that to output. The overall actions should be to take two random inputs $i$ and $j$, to update the cell $\term c$ with the last value $j$, and to output $i+j$. In Figure~\ref{fig:ex:ff} the term is first interpreted as an FMC-term and reduced to normal form, where each line is a beta-step, and then evaluated on the machine, where the initial memory provides the two expected inputs on $\term\rnd$ and one on $\term c$. (For compactness we give locations as a header and show only necessary stack elements.)
\end{example}

\begin{figure}
\[
\scalebox{0.85}{$
\begin{array}{@{}rl@{}}
                 \trm{["2\rnd<x>.}\utrm{"2[x]"0."3<y>}\trm{"3.c<\_>.[y]c"0."4c<z>.[z]c.[z]"0]."1<f>"0."1f"0."1f"0."5+"0."6<p>.[p]out}
\\[5pt]  \trm{["2\rnd<x>"0."3c<\_>.}\utrm{"3["2x"3]c"0."4c<z>}\trm{"4.[z]c.[z]"0]."1<f>"0."1f"0."1f"0."5+"0."6<p>.[p]out}
\\[5pt]  \utrm{["2\rnd<x>"0."3c<\_>"0.~"4["2x"4]c.["2x"4]"0]."1<f>}\trm{."1f"0."1f"0."5+"0."6<p>.[p]out}
\\[5pt]  \trm{"2\rnd<x>"0."3c<\_>"0.}\utrm{"4["2x"4]c}\trm{."4["2x"4]"0~~.~"2\rnd<y>"0.}\utrm{"3c<\_>}\trm{."4["2y"4]c.["2y"4]"0~~."5+"0."6<p>.[p]out}
\\[5pt]  \trm{"2\rnd<x>"0."3c<\_>"0."4["2x"4]"0~~.~"2\rnd<y>"0."4["2y"4]c.["2y"4]"0~~."5+"0."6<p>.[p]out}
\end{array}$}
\]
\bigskip
\[
\scalebox{0.85}{$
\begin{array}{@{(~}l@{~;~}l@{~;~}l@{~;~}l@{~,~}r@{~)}}
	\multicolumn{1}{@{}c@{}}{\out} 
  & \multicolumn{1}{@{}l@{}}{\rnd}
  & \multicolumn{1}{@{}l@{}}{\,c}
  & \multicolumn{1}{@{}l@{}}{\hspace{5pt}\lambda}
\\[2pt]             & \trm{"26}\cdot\trm{"27} & \trm{"3*} &                         & \trm{"2\rnd<x>"0."3c<\_>"0."4["2x"4]"0."2\rnd<y>"0."4["2y"4]c.["2y"4]"0."5+"0."6<p>.[p]out}
\\\hline            & \trm{"26}               & \trm{"3*} &                         &             \trm{"3c<\_>"0."4["27"4]"0."2\rnd<y>"0."4["2y"4]c.["2y"4]"0."5+"0."6<p>.[p]out}
\\\hline            & \trm{"26}               &           &                         &                       \trm{"4["27"4]"0."2\rnd<y>"0."4["2y"4]c.["2y"4]"0."5+"0."6<p>.[p]out}
\\\hline            & \trm{"26}               &           & \trm{"27}               &                                   \trm{"2\rnd<y>"0."4["2y"4]c.["2y"4]"0."5+"0."6<p>.[p]out}
\\\hline            &                         &           & \trm{"27}               &                                               \trm{"4["26"4]c.["26"4]"0."5+"0."6<p>.[p]out}
\\\hline            &                         & \trm{"26} & \trm{"27}               &                                                        \trm{"4["26"4]"0."5+"0."6<p>.[p]out}
\\\hline            &                         & \trm{"26} & \trm{"27}\cdot\trm{"26} &                                                                    \trm{"5+"0."6<p>.[p]out}
\\\hline            &                         & \trm{"26} & \trm{"513}              &                                                                          \trm{"6<p>.[p]out}
\\\hline            &                         & \trm{"26} &                         &                                                                         \trm{"6["513"6]out}
\\\hline \trm{"513} &                         & \trm{"26} &                         &                                                                                     \trm{*}
\end{array}$}
\]
\caption{Reduction followed by machine evaluation of the term in Example~\ref{ex:ff}}
\label{fig:ex:ff}
\end{figure}


\subsection{Confluence}
\label{sec:confluence}

In demonstrating confluence for the $\lambda$-calculus, the difficulty is reduction inside an argument: duplicating or deleting it creates converging reductions of different length. By contrast, \emph{spine reduction}, which reduces in every context \emph{except} argument position, is \emph{diamond} (peaks converge in one step).

While the two extensions of the FMC, \emph{locations} and \emph{sequencing}, do create new configurations of overlapping redexes, the situation is fundamentally the same. Spine reduction, defined analogously as reduction in every context except argument position, is diamond, and the remaining problem is the same as for the $\lambda$-calculus. The calculus thus remains confluent, which will be proved by the standard  \emph{parallel reduction} technique of Tait and Martin-L\"of~(see \cite{Barendregt-1984}) and Takahashi~\cite{Takahashi-1995}. 

The new configurations are the following. \emph{Sequencing} introduces terms of the form $\term{N;x.M}$, with reduction in $\term M$ or in $\term N$, where the latter may induce substitutions in $\term{x.M}$. But reduction in $\term N$ may not duplicate a redex in $\term M$, and vice versa, so a peak of this kind converges immediately. 

\emph{Locations} create two new overlapping configurations,
\[
\begin{aligned}
\text{nested:}      \qquad & \trm{"3[N]a"0."4[P]b"0."4b<y>"0."3a<x>"0."1M} \\[5pt]
\text{interleaved:} \qquad & \trm{"3[N]a"0."4[P]b"0."3a<x>"0."4b<y>"0."1M} 
\end{aligned}
\]
but both resolve immediately: in each case the two reducts will converge in one step to $\trm{"3\{N/x\}"4\{P/y\}"1M}$.

Because of this, the problem of confluence amounts to the problem of reduction in arguments, which may be duplicated or deleted, as it does in the regular $\lambda$-calculus. We formalize this observation in the following proposition (which is independent of the confluence result). \emph{Spine reduction} is given by closing the $\beta$-step under all contexts except in argument position. That is, reduction in each of $\term{x.M}$, $\term{[N]a.M}$, and $\term{<x>.M}$ may take place in $\term M$, but not in $\term N$.

\begin{proposition}
Spine reduction is diamond.
\end{proposition}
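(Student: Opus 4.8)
The plan is to reduce the diamond property to a positional analysis of redex occurrences along the \emph{spine} of a term. The spine-reduction contexts are exactly $\term{x.\{\}}$, $\term{[N]a.\{\}}$ and $\term{a<x>.\{\}}$, and never the argument context $\term{[\{\}]a.M}$; hence the root of any contracted redex lies on the main spine, which is a single linear sequence of push-, pop- and variable-actions terminating in $\term *$. A redex $\term{[N]a.H.a<x>.M}$ is pinned down by two positions on this spine, its push $\term{[N]a}$ and its matching pop $\term{a<x>}$ (the first subsequent action on $\term a$, forced to be a pop by the condition $\term a\notin\loc H$). Since each push determines its partner pop and vice versa, two distinct spine redexes occupy four distinct positions, which as two intervals on a line fall into exactly three configurations: \emph{disjoint}, \emph{nested}, and \emph{crossing} (the ``interleaved'' case). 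First I would fix this vocabulary and the notion of residual, and reduce the goal to showing that in each configuration the two one-step reducts converge in one further step each.

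The engine of every case is one structural observation, which I would isolate as a lemma: the head-context grammar $\term H \Coloneqq \term{\{\}} \mid \term{[M]a.H} \mid \term{a<x>.H}$ has \emph{no} sequential-variable prefix, so along a head context a substitution $\term{\{N/x\}}$ reaches only argument subterms and leaves the spine itself untouched. In particular $\loc{\{N/x\}H}=\loc H$, and the skeleton of binders of $\term H$ is preserved up to $\alpha$. This is exactly what makes spine reduction diamond: contracting one redex substitutes into the body of the term but can never splice a fresh action onto the spine between another redex's push and its matching pop. Consequently the side-conditions $\term a\notin\loc H$ and $\bv{\term H}\cap\fv{\term N}=\varnothing$ survive the substitution, and the residual of the untouched redex is again a redex.

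With this in hand the three cases become routine. In the \emph{disjoint} case neither redex lies in the argument of the other (spine redexes never do), so contracting one copies only its own argument $\term N$ and leaves a single residual of the other, which persists as a redex by the lemma; the two contractions commute because the bound variables are fresh, giving $\term x\notin\fv{\term{N_2}}$ and $\term y\notin\fv{\term{N_1}}$, so both orders reach the term carrying both substitutions. The \emph{nested} and \emph{crossing} cases are the configurations $\term{[N]a.[P]b.b<y>.a<x>.M}$ and $\term{[N]a.[P]b.a<x>.b<y>.M}$: here contracting one redex deletes actions from, and substitutes into, the head context of the other, but by the lemma the residual is again a redex, and both reducts converge in one step to $\term{\{N/x\}\{P/y\}M}$, using the substitution lemma together with freshness to commute the two substitutions. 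This establishes $\leftarrow\cdot\rightarrow\ \subseteq\ \rightarrow\cdot\leftarrow$ for distinct redexes; coincident redexes are trivial.

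The one genuinely delicate point, and the step I expect to absorb most of the effort, is verifying that residuals remain redexes in the crossing case, where contracting the outer redex simultaneously removes the inner redex's bound pop from, and performs a substitution inside, the surviving redex's head context. This is precisely where the ``no sequential variable in a head context'' lemma is indispensable, and it is the formal reason that spine reduction is diamond while full $\beta$-reduction is merely confluent: the sole mechanism that could duplicate or annihilate a tracked redex is reduction inside an argument, which spine reduction excludes by definition.
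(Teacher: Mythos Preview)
Your proposal is correct and matches the paper's approach: the paper does not give a formal proof but only the informal discussion preceding the proposition, which identifies the same nested and interleaved (your ``crossing'') configurations and observes that each converges in one step to $\term{\{N/x\}\{P/y\}M}$. Your framing as intervals on a linear spine, and in particular your explicit lemma that head contexts contain no sequential-variable prefix (so $\loc{\{N/x\}H}=\loc H$ and residuals remain redexes), is exactly the structural fact the paper is relying on implicitly; isolating it is a genuine improvement over the paper's sketch.
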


The full confluence proof is a standard application of parallel reduction, and is given in Appendix~\ref{A:confluence}.

\begin{theorem}
\label{thm:confluence}
Reduction $\rw$ is confluent.
\end{theorem}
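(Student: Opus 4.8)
The plan is to use the Tait--Martin-L\"of and Takahashi method of \emph{parallel reduction}, exactly as announced before the statement. I would first introduce a relation $\rwp$ that contracts any chosen set of redexes simultaneously: it is closed under the obvious congruence rules for the constructors $\term *$, $\term{x.M}$, $\term{[N]a.M}$ and $\term{a<x>.M}$, together with a parallel $\beta$-rule stating that whenever $\term N\rwp\term{N'}$, $\term M\rwp\term{M'}$, and $\term H\rwp\term{H'}$ (parallel reduction inside the arguments of the push-actions of the head context $\term H$, with $\term a\notin\loc H$ and $\bv{\term H}\cap\fv{\term N}=\varnothing$), then $\term{[N]a.H.a<x>.M}\rwp\term{H'.\{N'/x\}M'}$. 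By construction $\term M\rw\term N$ implies $\term M\rwp\term N$, and an easy induction on the derivation of $\rwp$ shows that $\term M\rwp\term N$ implies that $\term N$ lies in the reflexive--transitive closure of $\rw$ from $\term M$. Hence the reflexive--transitive closures of $\rw$ and $\rwp$ coincide, and it suffices to prove that $\rwp$ is diamond, since the diamond property lifts to confluence of the transitive closure.

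Two lemmas carry the argument. The \emph{composition lemma} states that $\term M\rwp\term{M'}$ and $\term N\rwp\term{N'}$ imply $\term{M;N}\rwp\term{M';N'}$, and the \emph{substitution lemma} states that $\term M\rwp\term{M'}$ and $\term N\rwp\term{N'}$ imply $\term{\{N/x\}M}\rwp\term{\{N'/x\}M'}$. The substitution lemma is the point at which the \emph{sequencing} generalization genuinely departs from the $\lambda$-calculus: since $\term{\{N/x\}x.P}=\term{N;\{N/x\}P}$ replaces a variable head by a sequential composition, the variable case of the induction on $\term M$ must appeal to the composition lemma. Both are proved before the main property by routine inductions on term structure.

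Following Takahashi, I would then define the \emph{complete development} $\term{M^*}$, which contracts \emph{all} redexes of $\term M$ at once. It commutes with the non-redex constructors, and on a push it inspects $\term M$ for the first action on location $\term a$: if this exhibits a decomposition $\term M=\term{H.a<x>.P}$ with $\term a\notin\loc H$, then $\term{([N]a.M)^*}=\term{H^*.\{N^*/x\}P^*}$, and otherwise $\term{([N]a.M)^*}=\term{[N^*]a.M^*}$. This is well-defined by recursion on term size, since the maximal $\term a$-free prefix, hence the decomposition, is forced uniquely. The core step is the \emph{triangle property}: if $\term M\rwp\term{M'}$ then $\term{M'}\rwp\term{M^*}$, proved by induction on $\term M$ using the two lemmas. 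The diamond property of $\rwp$ is then immediate, as any span $\term M\rwp\term{M_1}$ and $\term M\rwp\term{M_2}$ closes at $\term{M^*}$, and confluence of $\rw$ follows. The main obstacle is the interaction already flagged in the text: \emph{locations} create the overlapping \emph{nested} and \emph{interleaved} configurations, so in the $\beta$-case of the triangle property I must verify that contracting an outer redex and performing the (composing) substitution neither destroys nor duplicates an inner redex in a way that $\term{M^*}$ fails to absorb. Checking this compatibility, where the head-context overlaps combine with the fact that substitution performs composition, is the technical heart of the proof; reduction inside arguments, which may be duplicated or erased, is handled exactly as in the $\lambda$-calculus and is precisely what parallel reduction is designed to absorb.
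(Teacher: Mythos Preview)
Your framework is the right one, and the composition and substitution lemmas you isolate are exactly the ones the paper proves. The gap is in your definition of the complete development $\term{M^*}$. Because you compute $\term{H^*}$ and $\term{P^*}$ separately and then concatenate, you lose any redex that straddles the boundary between them. Concretely, take the interleaved configuration $\term{[N]a.[P]b.a<x>.b<y>.Q}$: your recipe picks out $\term{H}=\term{[P]b}$ and continuation $\term{b<y>.Q}$, so $\term{M^*}=\term{[P^*]b.b<y>.\{N^*/x\}Q^*}$, leaving the $b$-redex uncontracted. But your $\rwp$ also admits $\term{M}\rwp\term{[N]a.a<x>.\{P/y\}Q}$ by contracting the $b$-redex first (via congruence on $\term{[N]a}$), and this reduct cannot parallel-reduce to your $\term{M^*}$ since it no longer has any $b$-action at the head. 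So the triangle property fails. The nested configuration $\term{[N]a.[P]b.b<y>.a<x>.Q}$ exhibits the same failure. Your own inductive $\rwp$ rules, which only reduce inside the \emph{arguments} of $\term H$, cannot contract both interleaved redexes in one step either, so the full development is not even a $\rwp$-step in your sense; there is no choice of $\term{M^*}$ that is simultaneously reachable from $\term M$ and an upper bound for all one-step reducts.

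The paper sidesteps this by abandoning the Takahashi complete development and working instead with explicit \emph{markings} of redex sets. A parallel step is then the marked reduct $\term{(M^\MX)_\MX}$, and crucially the redex clause computes $\term{(H.M)_\MX}$ on the \emph{combined} term, so a marked inner redex that spans the head-context boundary is still contracted. Diamond is then proved directly by showing that two markings commute, $\term{(M_\MX)_\MO}=\term{M_{\MX\MO}}=\term{(M_\MO)_\MX}$; the delicate part is the ``second singly-marked redex'' case, which establishes by a nested induction on $\term H$ that a $\MO$-marked redex survives $\MX$-reduction as a residual of the same shape. If you want to rescue the Takahashi route you would need to strengthen $\rwp$ so that it can contract overlapping redexes in one step---which amounts to reinventing the marked-reduct definition---or else prove diamond directly without a maximal element.
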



\subsection{Simple types for the FMC}
\label{sec:FMC types}

As with sequential types, FMC types will represent the input/output behaviour of the machine. Since a memory is a family of stacks, types will use families of type vectors.

\begin{definition}
\emph{FMC-types} $\type{r,s,t,u}$ are given by:
\[
		\type t	   ~\Coloneqq~ \type{?sA>!tA}
\qquad	\type{!tA} ~\Coloneqq~ \{\type{!t_a}\mid a\in A\}
\qquad	\type{!t}  ~\Coloneqq~ \type{t_1\dots t_n}
\]
The typing rules for the FMC are in Figure~\ref{fig:FMC-types}. They use the following notation.
\emph{Concatenation} of two vector families is pointwise, $\type{!sA !tA}=\{\type{!s_a!t_a}\mid a\in A\}$, and he empty vector is denoted $\type\e$. A \emph{slice} $\type{!tA|_a}$ of a family $\type{!tA}=\{\type{!t_a}\mid a\in A\}$ is the vector $\type{!t_a}$, and a slice $\type{t_a}$ of a type $\type t=\type{?rA>!sA}$ is the type $\type{?r|_a~>~!s|_a}$ (i.e.
$\type{t_a}$ restricts $\type t$ to a single location $a$). \emph{Composition} is slice-wise:  $\type{s.t}~=~\{ \type{s|_a.t|_a}\mid a\in A\}$. Finally, a \emph{singleton} $\type{a(!t)}$ is a type vector at a single location, with all other locations empty, defined by $\type{a(!t)|_a}=\type{!t}$ and $\type{a(!t)|_b}=\type{\e}$ for $\type a\neq \type b$. A singleton $\type{\lambda(!t)}$ on the main location $\lambda$ may be written as $\type{!t}$.
\end{definition}


\begin{figure}
\[
\begin{array}{c@{\qquad}c}
	\infer[\TR *]{\term{G |- *:?tA>!tA}}{}
&
 	\infer[\TR a]
	  {\term{G |- [N]a.M: ?sA>!tA}}
	  {\term{G |- N:r} &&
	   \term{G |- M:a(r)\,?sA>!tA}
	  }
\\ \\
	\infer[\TR x]
	 {\term{G , x:?rA>!sA |- x.M:?rA\,?tA>!uA}}
	 {\term{G , x:?rA>!sA |- {\phantom{x.}}M:?sA\,?tA>!uA}}
&
	\infer[\TR l]
	  {\term{G |- a<x>.M : a(r)\,?sA>!tA}}
	  {\term{G , x:r |- M:?sA>!tA}}
\end{array}
\]
\caption{Typing rules for the Functional Machine Calculus}
\label{fig:FMC-types}
\end{figure}


Poly-types embed as types of the form $\type{?tA>}$ by the definitions $\type o~\defeq~\type{(>)}$ and $\type{a(s)\imp(?tA>)}~\defeq~\type{a(s)\,?tA>}$, and sequential types embed directly as types over only the location $\lambda$. The properties proved of sequential types in Section~\ref{sec:sequencing} carry over straightforwardly: strict composition, expansion, composition, substitution, subject reduction, and termination of the machine.

\begin{example}
The \emph{singleton} construct $\type{a(!t)}$ gives a natural way of writing types in practice. The term from Example~\ref{ex:ff} may be typed as follows, where $\type\Z$ is a base type of integers.
\[
	\term{(f~=~\rand~;~\set c~;~\get c)~;~f~;~f~;~+~;~\print~:~\rnd(\Z~\Z)~c(\Z) > c(\Z)~\out(\Z)}
\]
The type expresses that the term pops two integers from $\type\rnd$ and one from $\type c$, and pushes one integer to $\type c$ and one to $\type\out$. (Note that there are other ways of writing the same type, since singleton types on different locations may permute.) Below left are the types of the defined subterms, and the full type is built up below right by composing these.
\[
\begin{array}{r@{}r@{}l}
                    & \term{\plus}          &~:~ \type{\Z~\Z>\Z}
\\ \term{\rand}  ~=~& \term{\rnd<x>.[x]}    &~:~\type{\rnd(\Z)>\Z}
\\ \term{\print} ~=~& \term{<x>.[x]\out}    &~:~\type{\Z>\out(\Z)}
\\ \term{\set c} ~=~& \term{<x>.c<\_>.[x]c} &~:~\type{\Z~c(\Z)>c(\Z)}
\\ \term{\get c} ~=~& \term{c<x>.[x]c.[x]}  &~:~\type{c(\Z)>c(\Z)~\Z}
\end{array}
\qquad\qquad
\begin{array}{@{}l@{}l@{}}
	\phantom{(f{=}}\term{\rand;\set c} &~:~\type{\rnd(\Z)~c(\Z)~>~c(\Z)}
\\ 
	\phantom{(f{=}}\term{\rand;\set c;\get c} &~:~\type{\rnd(\Z)~c(\Z)~>~c(\Z)~\Z}
\\ 
	\term{(f=\rand;\set c;\get c)} &~:~\type{(>)}
\\ 
	\term{(f=\dots);f;f} &~:~\type{\rnd(\Z~\Z)~c(\Z)~>~c(\Z)~\Z~\Z}
\\ 
	\term{(f=\dots);f;f;+;\print} &~:~\type{\rnd(\Z~\Z)~c(\Z)~>~c(\Z)~\out(\Z)}
\end{array}
\]
The third term is only a redex, which pushes and then pops a value, giving it an empty type $\type{(>)}$. The type of the fourth term is that of $\term{f;f}$. Separating the self-composition of the type of $\term f$ for each location exposes how the inputs on $\type\rnd$ and outputs on $\type\lambda$ accumulate, while the output and input on $\type c$ interact:
\[
\begin{array}{r@{\,}c@{\,}lcr@{\,}c@{\,}lcr@{\,}c@{\,}l}
			\type{\rnd(\Z)} & \type> &              & \type. & \type{\rnd(\Z)} & \type> &              & = & \type{\rnd(\Z~\Z)} & \type> &
\\[-3pt]	\type{c(\Z)}    & \type> & \type{c(\Z)} & \type. & \type{c(\Z)}    & \type> & \type{c(\Z)} & = & \type{c(\Z)}       & \type> & \type{c(\Z)} 
\\[-3pt]	                & \type> & \type{\Z}    & \type. &                 & \type> & \type{\Z}    & = &                    & \type> & \type{\Z~\Z} 
\end{array}
\]
By way of illustration, in Haskell the same example may be written as follows.

\begin{center}
\texttt{\small
\begin{tabular}{@{}l}
  {\color2example} :: {\color3RandT StdGen (StateT Int IO) ()}
\\[-3pt]{\color2example} = {\color1do}
\\[-3pt]\quad{\color1let} f = {\color1do} x <- rand; lift (put x); lift get
\\[-3pt]\quad y <- f
\\[-3pt]\quad z <- f
\\[-3pt]\quad lift (lift (print (y+z)))
\end{tabular}
}
\end{center}

To combine the effects of I/O, state, and random generation, the Haskell example uses a stack of monad transformers. The effects are then layered in a fixed order, and to access each effect the function \texttt{lift} is used to move to the next layer. Thus, the \texttt{rand} action requires no lifting since the random generator is at the top of the transformer stack; the state actions \texttt{put} and \texttt{get} must be lifted once; and the \texttt{print} action must be lifted twice, since I/O is at the bottom of the transformer stack.
\end{example}

The example highlights the following differences between monad transformers and the FMC. Firstly, reader/writer effects in the FMC combine seamlessly, without requiring their organisation in a stack, and are accessed by their locations instead of a lifting function. Secondly, sequential composition in the FMC is a basic operation, with a close connection between the syntax and its execution, where Haskell \texttt{do}-notation is syntactic sugar for more involved monadic operations. Thirdly, the FMC type system accounts for the effectful operations that a term performs, where monad transformer types indicate which effects may be present, but not which operations they perform.

Of course, where monads are universal, the FMC is presently restricted to reader/writer effects. How to broaden the range of effects covered in the FMC is an important direction for future work.


\section{Further work}

We have given an exploratory overview of the Functional Machine Calculus with the most essential results: the natural capture of algebraic laws for effects by reductions and permutations; the encoding of related formalisms for controlling execution behaviour such as monadic constructs, $\cbpv$, $\kappa$-calculus, and Arrows; confluence; and termination of the machine with simple types. A forthcoming paper will strengthen these results with domain-theoretic and categorical semantics, and strong normalization with simple types.

Present and future work aims to extend the FMC beyond reader/writer effects and with standard features. One direction is to include sum types, datatypes, and error handling, where it looks possible to capture all three in a uniform way. A second direction is to introduce parallel composition and explore the relation with process calculi, where our type system promises to give something closely related to session types. A third direction is local mutable store, by introducing a \emph{new} construct for locations, and generalizing locations to \emph{regions} to capture mutable data structures (arrays, graphs), which then leads naturally into an exploration of dependent types for the FMC. A fourth direction is to explore the connection with string diagrams, and to introduce constructs to capture diagrammatic reasoning, for example interaction nets or quantum diagrammatic systems.

An important theoretical challenge is type inference. This appears to be open also for the sequential $\lambda$-calculus: existing algorithms for concatenative languages are limited~\cite{Stoddart-Knaggs-1993,Diggins-2008}, and (we believe) unable to find the type for the self-application in Example~\ref{ex:self app}.


\section*{Acknowledgements}

I am deeply grateful to Chris Barrett and Guy McCusker for valuable contributions to both presentation and content of this work. 
I would further like to thank the following people for constructive discussions and pointers to the literature: Ugo Dal Lago, Giulio Guerrieri, Paul Levy, John Power, Alex Simpson, and Vincent van Oostrom. This work was supported by EPSRC Project EP/R029121/1 \emph{Typed lambda-calculi with sharing and unsharing}.


\bibliographystyle{entics}
\bibliography{newFMC}

%
\appendix

\section{The confluence proof}
\label{A:confluence}

This section gives a complete proof of confluence using parallel reduction. The particular approach is to extend the syntax of terms with marked redexes, to define parallel reduction and to identify residuals.

\begin{definition}
A \define{marked redex} is one whose application and abstraction are marked by the symbols $(\MXL)$ and ($\MXR$), as follows.
\[
	\term{[N]a\MXL.H.\MXR a<x>.M}~.
\]
A \define{redex-marked} or \define{marked} term $\term{M^\MX}$ is one where a selection of redexes is marked. For a marked term consisting of a head context and a term $\term{(H.M)^\MX}$, the marking of both components separately is indicated by $\term{H^\MXL}$ and $\term{M^\MXR}$. The \define{marked reduct} $\term{(M^\MX)_\MX}$ of a marked term $\term{M^\MX}$ is defined as follows.
\[
\begin{aligned}
		\term{(*)_\MX}              &= \term *				 
	  & \term{([N^\MX]a.M^\MX)_\MX} &= \term{[(N^\MX)_\MX]a.(M^\MX)_\MX} 
\\[5pt] \term{(x.M^\MX)_\MX}        &= \term{x.(M^\MX)_\MX} 
      & \term{(a<x>.M^\MX)_\MX}     &= \term{a<x>.(M^\MX)_\MX}
\end{aligned}
\]
\[
	\term{([N^\MX]a\MXL.H^\MXL.\MXR a<x>.M^\MXR)_\MX} 
	=
	\term{\{(N^\MX)_\MX/x\}(H^\MXL.M^\MXR)_\MX}
\]
A \define{parallel reduction step} $\term{M^\MX}\rwp \term{(M^\MX)_\MX}$ relates a marked term to its marked reduct, and an unmarked term to all its marked reducts: $\term M\rwp \term{(M^\MX)_\MX}$ for each marking $\term{M^\MX}$ of $\term{M}$.
\end{definition}

To reduce clutter, a marked reduct $\term{(M^\MX)_\MX}$ may be abbreviated as $\term{M_\MX}$. 

\begin{proposition}
\label{prop:parallel correct}
A parallel reduction step is a beta-reduction: $\term{M}\rws\term{(M^\MX)_\MX}$ whenever $\term{M^\MX}$ is a marking of $\term M$.
\end{proposition}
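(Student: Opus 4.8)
The plan is to prove this by induction on the marked term $\term{M^\MX}$, following exactly the clauses of the inductive definition of the marked reduct $\term{(M^\MX)_\MX}$. In each clause I must exhibit a (possibly empty) sequence of ordinary $\beta$-steps witnessing $\term M\rws\term{(M^\MX)_\MX}$, where $\term M$ is the underlying unmarked term. The point of the lemma is that collapsing a whole marking into one parallel step is simulated by firing the marked redexes one after another, so the induction should thread the per-redex firings through the term structure.

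The four non-redex clauses are routine and carry the empty or trivially-assembled reductions. For $\term *$ there is a zero-step reduction. For $\term{x.M^\MX}$, for an unmarked application $\term{[N^\MX]a.M^\MX}$, and for $\term{a<x>.M^\MX}$, I apply the induction hypothesis to the immediate submarkings, obtaining $\term N\rws\term{N_\MX}$ and $\term M\rws\term{M_\MX}$, and then close under the surrounding constructor: since $\rw$ is closed under all contexts, so is its reflexive--transitive closure $\rws$, and the subterm reductions assemble into one for the whole term (for the application this uses closure in both function and argument position).

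The redex clause $\term{[N^\MX]a\MXL.H^\MXL.\MXR a<x>.M^\MXR}$ is the heart of the argument and is where the single fired redex is accounted for. Its two immediate submarkings are $\term{N^\MX}$ and the marked term $\term{(H.M)^\MX}$ with components $\term{H^\MXL}$ and $\term{M^\MXR}$; both are structurally smaller (the latter drops the outer push $\term{[N]a}$ and the binder $\term{a<x>}$), so the induction hypothesis gives $\term N\rws\term{N_\MX}$ and $\term{H.M}\rws\term{(H^\MXL.M^\MXR)_\MX}$. Writing $\term P$ for the latter reduct, I assemble the target reduction in three stages:
\[
	\term{[N]a.H.a<x>.M}~\rws~\term{[N_\MX]a.H.a<x>.M}~\rw~\term{H.\{N_\MX/x\}M}~\rws~\term{\{N_\MX/x\}P}~,
\]
whose right-hand side is $\term{(M^\MX)_\MX}=\term{\{(N^\MX)_\MX/x\}(H^\MXL.M^\MXR)_\MX}$. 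The first stage reduces inside the argument by context closure; the second is the single $\beta$-step firing the now-leading redex; and the third applies the substitution $\term{\{N_\MX/x\}}$ to the reduction $\term{H.M}\rws\term P$, using that $\term x\notin\fv{\term H}$ (by $\alpha$-renaming) so that the left-hand side is exactly $\term{H.\{N_\MX/x\}M}$.

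The third stage carries the real work, and I expect the substitution bookkeeping to be the only genuine obstacle. It rests on a substitutivity lemma for reduction --- if $\term Q\rws\term{Q'}$ then $\term{\{L/x\}Q}\rws\term{\{L/x\}Q'}$ --- applied with $\term L=\term{N_\MX}$. I would establish this by reducing to the single-step case and invoking the usual substitution lemma $\term{\{L/x\}\{N'/y\}R}=\term{\{(\{L/x\}N')/y\}\{L/x\}R}$ for the inner substitution created by firing. The subtle point, which must be checked explicitly, is that substitution preserves \FMC-redexes: because head contexts $\term H$ contain no bare variable prefix $\term{x.H}$ --- only pushes and pops --- substituting into a redex only rewrites pushed arguments and leaves the top-level spine of $\term H$ intact, so both the side condition $\term a\notin\loc H$ and the matching of the outer push with its pop survive. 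Finally I must verify that the fired redex in stage two is well-formed: $\term a\notin\loc H$ is inherited unchanged, and $\bv{\term H}\cap\fv{\term{N_\MX}}=\varnothing$ follows from $\bv{\term H}\cap\fv{\term N}=\varnothing$ together with $\fv{\term{N_\MX}}\subseteq\fv{\term N}$, since reduction creates no free variables. With the substitutivity lemma in place, the three stages compose and the proposition follows.
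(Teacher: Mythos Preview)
Your proof is correct and follows the same induction and the same three-stage decomposition of the redex case as the paper: first reduce the argument in context, then fire the redex, then push the substitution through the inductive reduction for $\term{H.M}$. The only difference is that you make explicit the substitutivity lemma $\term Q\rws\term{Q'}\Rightarrow\term{\{L/x\}Q}\rws\term{\{L/x\}Q'}$ and the preservation of the redex side-conditions, which the paper uses silently.
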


\begin{proof}
By induction on the size of $\term{M^\MX}$.
\begin{itemize}
	\item Unit case: immediate by $\term{(*)_\MX}=\term*$.
	\item Variable case: if $\term{M^\MX}\rws\term{M_\MX}$ then $\term{x.M^\MX}\rws\term{x.M_\MX}$.
	\item Unmarked application case: if $\term{M^\MX}\rws\term{M_\MX}$ and $\term{N^\MX}\rws\term{N_\MX}$ then $\term{[N^\MX]a.M^\MX} ~\rws~ \term{[N_\MX]a.M_\MX}$.
	\item Unmarked abstraction case: if $\term{M^\MX}\rws\term{M_\MX}$ then $\term{a<x>.M^\MX}\rws\term{a<x>.M_\MX}$.
	\item Redex case: if $\term{(H.M)^\MX}\rws\term{(H.M)_\MX}$ and $\term{N^\MX}\rws\term{N_\MX}$ then
	\[
	\begin{array}{r@{}c@{}l@{}}
    	\term{[N^\MX]a\MXL.H^\MXL.\MXR a<x>.M^\MXR} 
    	&~\rws~& \term{[N_\MX]a\MXL.H^\MXL.\MXR a<x>.M^\MXR}
	\\	&~\rw~ & \term{H^\MXL.\{N_\MX/x\}M^\MXR}
	\\	&~=~   & \term{\{N_\MX/x\}(H^\MXL.M^\MXR)}
	\\	&~\rws~& \term{\{N_\MX/x\}(H.M)_\MX}~.
	\end{array}
	\]
where in the second line, $a\in\loc{\term H}$ by the definition of a redex, and for the equality on the third line, $\term H$ does not bind in $\term N$ and $\term x$ is not free in $\term H$, by $\alpha$-equivalence. 
\end{itemize}
\end{proof}

To show that parallel reduction is diamond, a term is reduced according to two markings, $\term{\MX}$ and $\term{\MO}$. These are then applied simultaneously and interchangeably, and may be considered a single marking $\termcolor{{\MX}{\MO}}=\termcolor{{\MO}{\MX}}$~:
\[
	 \term{(M^\MX)^\MO} = \term{M^{\MX\MO}} = \term{(M^\MO)^\MX}
\]
If this term is reduced relative to one marking, the other is preserved: $\term{(M^{\MX\MO})_\MO}=\term{((M^\MO)_\MO)^\MX}$. The proof then amounts to showing that reducing by each marking is commutative and the same as reducing along both markings simultaneously:
\[
	\term{(M_\MX)_\MO} = \term{M_{\MX\MO}} = \term{(M_\MO)_\MX}
\]

\begin{lemma}
\label{lem:parallel-composition}
Parallel reduction commutes with composition: $\term{M_\MX;N_\MX}=\term{(M;N)_\MX}$.
\end{lemma}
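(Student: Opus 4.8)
The plan is to prove $\term{M_\MX;N_\MX}=\term{(M;N)_\MX}$ by induction on the size of the marked term $\term M$, splitting on its head action so as to run the recursive clauses of composition and of marked reduction in lockstep. Recall that every term is a finite sequence of actions ending in $\term*$, and that composition threads $\term N$ into this terminal position; the definition of $\term{(\cdot)_\MX}$ follows the same head-first decomposition, so the two operations interact cleanly except where a redex is fired.

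The base case $\term M=\term*$ is immediate, both sides reducing to $\term{N_\MX}$. When the head of $\term M$ is a variable, an unmarked application, or an unmarked abstraction, both composition and marked reduction commute past the head action unchanged, and the inductive hypothesis applied to the strictly smaller continuation finishes the case.

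The main obstacle is the redex case, where $\term M=\term{[N']a\MXL.H.\MXR a<x>.M'}$ carries a marked redex at the head. Composition preserves this shape---it alters neither $\term H$ nor the markings, and leaves intact the side-conditions $\term a\notin\loc{H}$ and $\bv{\term H}\cap\fv{\term{N'}}=\varnothing$---so $\term{M;N}=\term{[N']a\MXL.H.\MXR a<x>.(M';N)}$ is again a marked redex. Hence $\term{(M;N)_\MX}=\term{\{(N')_\MX/x\}(H.(M';N))_\MX}$, while on the other side $\term{M_\MX}=\term{\{(N')_\MX/x\}(H.M')_\MX}$. Bridging the two requires two auxiliary commutations. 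First, composition threads through a head context, $\term{(H.P);Q}=\term{H.(P;Q)}$, by induction on $\term H$. Second, substitution commutes with composition in the form $\term{\{P/x\}Q;R}=\term{\{P/x\}(Q;R)}$ whenever $\term x\notin\fv{\term R}$, proved by induction on $\term Q$ (the variable-head subcase uses associativity of composition). Since the redex binds $\term x$, $\alpha$-equivalence lets us assume $\term x\notin\fv{\term N}$, hence $\term x\notin\fv{\term{N_\MX}}$ because reduction creates no free variables; the second lemma then gives $\term{M_\MX;N_\MX}=\term{\{(N')_\MX/x\}((H.M')_\MX;N_\MX)}$.

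Applying the inductive hypothesis to the strictly smaller subterm $\term{H.M'}$ yields $\term{(H.M')_\MX;N_\MX}=\term{((H.M');N)_\MX}$, and the first lemma rewrites $\term{(H.M');N}$ as $\term{H.(M';N)}$. Substituting this back produces $\term{\{(N')_\MX/x\}(H.(M';N))_\MX}=\term{(M;N)_\MX}$, closing the case. No step beyond these two routine commutation lemmas and the inductive hypothesis is needed, so the only real content lies in getting the substitution-and-composition bookkeeping of the redex case exactly right.
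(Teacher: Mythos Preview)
Your proof is correct and follows the same induction-on-size-of-$M$ approach as the paper. The paper's redex case is written more tersely---it compresses your two auxiliary facts (that composition threads through a head context, and that substitution commutes with composition when the bound variable is fresh for the right factor) into a single unjustified equality---so your version is simply a more explicit rendering of the same argument.
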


\begin{proof}
By induction on the size of $\term{M}$.
\begin{itemize}

	\item
Unit case: $\term{(*)_\MX;N_\MX}=\term{*;N_\MX}=\term{N_\MX}=\term{(*;N)_\MX}$.

	\item 
Variable case: if $\term{M_\MX;N_\MX}=\term{(M;N)_\MX}$ then
\[
	\term{(x.M)_\MX;N_\MX}=\term{x.(M_\MX;N_\MX)} = \term{x.(M;N)_\MX} = \term{((x.M);N)_\MX}~.
\]

	\item 
Unmarked application case: if $\term{M_\MX;N_\MX}=\term{(M;N)_\MX}$ then
\[
\begin{array}{r@{}l}
	\term{([P]a.M)_\MX;N_\MX}
		&~=~\term{[P_\MX]a.(M_\MX;N_\MX)}
	\\  &~=~\term{[P_\MX]a.(M;N)_\MX}
	\\  &~=~\term{(([P]a.M);N)_\MX}~.
\end{array}
\]

	\item 
Unmarked abstraction case: if $\term{M_\MX;N_\MX}=\term{(M;N)_\MX}$ then
\[
\begin{array}{r@{}l}
	\term{(a<x>.M)_\MX;N_\MX}
		&~=~\term{a<x>.(M_\MX;N_\MX)}
	\\  &~=~\term{a<x>.(M;N)_\MX}
	\\  &~=~\term{((a<x>.M);N)_\MX}~.
\end{array}
\]

	\item 
Marked redex case: if $\term{(H.M)_\MX;N_\MX}=\term{((H.M);N)_\MX}$ and $\term x\notin\fv{\term N}$ then
	\[
	\begin{array}{r@{}l}
    	\term{([P]a\MXL.H.\MXR a<x>.M)_\MX;N_\MX} 
    	&~=~ \term{(\{P_\MX/x\}(H.M)_\MX);N_\MX }
	\\	&~=~ \term{\{P_\MX/x\}((H.M);N)_\MX }
	\\	&~=~ \term{(([P]a\MXL.H.\MXR a<x>.M);N)_\MX}~.
	\end{array}
	\]
\end{itemize}
\end{proof}

\begin{lemma}
\label{lem:parallel-substitution}
Parallel reduction commutes with substitution: $\term{\{N_\MX/x\}M_\MX}=\term{(\{N/x\}M)_\MX}$.
\end{lemma}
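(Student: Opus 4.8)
The plan is to prove the identity by induction on the size of $\term M$, exactly paralleling Lemma~\ref{lem:parallel-composition} above, with the marking on $\term{\{N/x\}M}$ understood as the one inherited from $\term M$ together with a fresh copy of the marking on $\term N$ for each occurrence of $\term x$ that is replaced. The routine cases are $\term *$, a non-substituted variable $\term{y.M}$ with $\term y\neq\term x$, an unmarked application $\term{[P]a.Q}$, an unmarked abstraction $\term{a<y>.M}$ (with $\term y\notin\fv{\term N}$), and the binding abstraction $\term{a<x>.M}$ where substitution halts. In each of these I would unfold the defining clauses of capture-avoiding substitution and of the marked reduct $\term{(-)_\MX}$, push the outermost constructor through both operations, and close by the inductive hypothesis on the immediate subterms.

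The first delicate case is the substituted variable $\term M=\term{x.P}$, where substitution produces a composition rather than a prefix, $\term{\{N/x\}(x.P)}=\term{N;\{N/x\}P}$. On the left, $\term{\{N_\MX/x\}(x.P)_\MX}=\term{\{N_\MX/x\}x.P_\MX}=\term{N_\MX;\{N_\MX/x\}P_\MX}$, and the inductive hypothesis rewrites the tail as $\term{(\{N/x\}P)_\MX}$. On the right, $\term{(\{N/x\}(x.P))_\MX}=\term{(N;\{N/x\}P)_\MX}$, and here I invoke Lemma~\ref{lem:parallel-composition} to split the marked reduct of a composition into $\term{N_\MX;(\{N/x\}P)_\MX}$. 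The two sides then coincide; this is precisely the point at which the preceding lemma is required.

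The main obstacle is the marked redex case $\term M=\term{[P]a\MXL.H.\MXR a<y>.Q}$, where by $\alpha$-equivalence I may assume $\term y\neq\term x$ and $\term y\notin\fv{\term N}$. Reducing first gives $\term{(M)_\MX}=\term{\{P_\MX/y\}(H.Q)_\MX}$, so the left-hand side is $\term{\{N_\MX/x\}\{P_\MX/y\}(H.Q)_\MX}$; since $\term y\notin\fv{\term{N_\MX}}$ and $\term x\neq\term y$, the standard substitution-composition identity $\{N/x\}\{P/y\}R=\{\{N/x\}P/y\}\{N/x\}R$ pulls the outer substitution inward, and two applications of the inductive hypothesis (to the strictly smaller $\term P$ and to the strictly smaller body $\term{H.Q}$) turn it into $\term{\{(\{N/x\}P)_\MX/y\}(\{N/x\}(H.Q))_\MX}$. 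On the right, substitution preserves the redex marks and commutes with the head-context decomposition, so $\term{\{N/x\}M}$ is again a marked redex whose reduct is exactly $\term{\{(\{N/x\}P)_\MX/y\}(\{N/x\}(H.Q))_\MX}$, matching the left. The two auxiliary facts I rely on here---the substitution-composition identity and the invariance of redex marks under substitution---are routine properties of capture-avoiding substitution; the only point requiring genuine care is checking that substitution commutes with the $\term{H.Q}$ decomposition, so that the marked-redex clause of $\term{(-)_\MX}$ is applicable on both sides.
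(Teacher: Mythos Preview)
Your proposal is correct and follows essentially the same route as the paper: induction on the size of $\term M$, with the substituted-variable case handled via Lemma~\ref{lem:parallel-composition} and the marked-redex case via the substitution-composition identity $\term{\{N/x\}\{P/y\}R}=\term{\{\{N/x\}P/y\}\{N/x\}R}$ together with the inductive hypothesis on $\term P$ and on $\term{H.Q}$. You are slightly more explicit than the paper in separating out the non-substituted variable and binding-abstraction cases and in flagging that substitution must commute with the head-context decomposition, but the argument is the same.
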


\begin{proof}
By induction on the size of $\term{M^\MX}$.
\begin{itemize}
	\item 
Unit case: immediate by $\term{\{N/x\}*}~=~\term*$.

	\item 
Variable case: if $\term{\{N_\MX/x\}M_\MX}=\term{(\{N/x\}M)_\MX}$ then by Lemma~\ref{lem:parallel-composition},
\[
\begin{array}{r@{}l}
	\term{\{N_\MX/x\}(x.M)_\MX}
	&~=~ \term{\{N_\MX/x\}(x.M_\MX)}
\\	&~=~ \term{N_\MX;\{N_\MX/x\}M_\MX}
\\	&~=~ \term{N_\MX;(\{N/x\}M)_\MX}
\\	&~=~ \term{(N;\{N/x\}M)_\MX}
\\	&~=~ \term{(\{N/x\}(x.M))_\MX}~.
\end{array}
\]

	\item 
Unmarked application case: if $\term{\{N_\MX/x\}M_\MX}=\term{(\{N/x\}M)_\MX}$ and $\term{\{N_\MX/x\}P_\MX}=\term{(\{N/x\}P)_\MX}$ then
\[
\begin{array}{r@{}l}
	\term{\{N_\MX/x\}([P]a.M)_\MX}
	&~=~ \term{\{N_\MX/x\}([P_\MX]a.M_\MX)}
\\	&~=~ \term{[\{N_\MX/x\}P_\MX]a.\{N_\MX/x\}M_\MX}
\\	&~=~ \term{[(\{N/x\}P)_\MX]a.(\{N/x\}M)_\MX}
\\	&~=~ \term{([\{N/x\}P]a.\{N/x\}M)_\MX}~.
\end{array}
\]

	\item 
Unmarked abstraction case: if $\term{\{N_\MX/x\}M_\MX}=\term{(\{N/x\}M)_\MX}$ then
\[
\begin{array}{r@{}l}
	\term{\{N_\MX/x\}(a<y>.M)_\MX}
	&~=~ \term{\{N_\MX/x\}(a<y>.M_\MX)}
\\	&~=~ \term{a<y>.\{N_\MX/x\}M_\MX}
\\	&~=~ \term{a<y>.(\{N/x\}M)_\MX}
\\	&~=~ \term{(a<y>.\{N/x\}M)_\MX}~.
\end{array}		
\]	

	\item 
Marked redex case: if $\term{\{N_\MX/x\}(H.M)_\MX}=\term{(\{N/x\}(H.M))_\MX}$ and  $\term{\{N_\MX/x\}P_\MX}=\term{(\{N/x\}P)_\MX}$ then
\[
\begin{array}{r@{}l}
	&\term{\{N_\MX/x\}([P]a\MXL.H.\MXR a<y>.M)_\MX}
\\	&~=~ \term{\{N_\MX/x\}\{P_\MX/y\}(H.M)_\MX}
\\	&~=~ \term{\{\{N_\MX/x\}P_\MX/y\}\{N_\MX/x\}(H.M)_\MX}
\\	&~=~ \term{\{(\{N/x\}P)_\MX/y\}(\{N/x\}(H.M))_\MX}
\\	&~=~ \term{\{(\{N/x\}P)_\MX/y\}((\{N/x\}H).\{N/x\}M)_\MX}
\\	&~=~ \term{([\{N/x\}P]a\MXL.(\{N/x\}H).\MXR a<y>.\{N/x\}M)_\MX}
\\	&~=~ \term{(\{N/x\}([P]a\MXL.H.\MXR a<y>.M))_\MX}~.
\end{array}		
\]	
\end{itemize}
\end{proof}

\begin{lemma}
\label{lem:marked-reduction-composes}
For a doubly marked term, $\term{M^{\MX\MO}}$, reducing each marking in turn gives the same result as reducing both simultaneously: $\term{(M_\MX)_\MO}=\term{M_{\MX\MO}}$.
\end{lemma}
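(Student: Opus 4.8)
The plan is to prove the identity $\term{(M_\MX)_\MO}=\term{M_{\MX\MO}}$ by induction on the size of $\term M$, with a case analysis on the outermost constructor of $\term M$ and, when that constructor heads a redex, on which of the two markings the redex carries. The two commutation lemmas already established do the real work: Lemma~\ref{lem:parallel-substitution} lets me push the $\MO$-phase of reduction through the substitution created when the $\MX$-phase contracts a redex, and Lemma~\ref{lem:parallel-composition} (through the substitution lemma's variable case) absorbs the variable prefixes. Throughout I rely on the fact, noted before the lemma, that reducing one marking preserves the residuals of the other, so that the two colours can be carried simultaneously as $\term{M^{\MX\MO}}$.

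The non-redex cases are routine: for $\term*$ the claim is immediate, and for $\term{x.M}$, for $\term{[N]a.M}$ whose application is unmarked by $\MX$, and for $\term{a<x>.M}$, both $\term{(\cdot)_\MX}$ and $\term{(\cdot)_{\MX\MO}}$ commute with the constructor, so the statement follows by applying the induction hypothesis to the immediate subterms. The interesting cases are those where the outermost constructor is an application heading a redex $\term{[N]a.H.a<x>.M}$. I would show that, whether this redex is marked by $\MX$ only, by $\MO$ only, or by both, each of $\term{(M_\MX)_\MO}$ and $\term{M_{\MX\MO}}$ evaluates to
\[
	\term{\{N_{\MX\MO}/x\}(H.M)_{\MX\MO}}~.
\]
If the redex carries $\MX$, the $\MX$-phase contracts it to $\term{\{N_\MX/x\}(H.M)_\MX}$; applying Lemma~\ref{lem:parallel-substitution} to the subsequent $\MO$-phase gives $\term{\{(N_\MX)_\MO/x\}((H.M)_\MX)_\MO}$, and the induction hypothesis on $\term N$ and on $\term{H.M}$ (both smaller than $\term M$) rewrites this to the displayed term. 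If it carries $\MO$ but not $\MX$, the $\MX$-phase leaves a residual redex whose components have been $\MX$-reduced, the $\MO$-phase then contracts it, and the same two appeals deliver the same result. The simultaneous reduction $\term{M_{\MX\MO}}$ contracts the redex once — a redex bearing both colours counts as a single redex under the combined marking — and reduces the inner marks, again yielding the displayed term.

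The main obstacle is the bookkeeping at this overlapping redex. Two points need care. First, contracting one marking must not corrupt the other marking's redexes: substituting $\term{N_\MX}$ for $\term x$ may duplicate or delete occurrences of $\term x$, and with them copies of any $\MO$-marked redexes inside $\term N$, but this is exactly the situation governed by Lemma~\ref{lem:parallel-substitution}, so the residual $\MO$-redexes are reduced correctly in every copy. Second, in the $\MO$-only case the residual must still be a genuine redex after the $\MX$-phase; this holds because reduction introduces no new locations and no new free variables, so the side conditions $\term a\notin\loc H$ and $\bv{\term H}\cap\fv{\term N}=\varnothing$ are preserved. Once the identity is proved, swapping the roles of $\MX$ and $\MO$ and using $\term{M^{\MX\MO}}=\term{M^{\MO\MX}}$ yields the symmetric identity $\term{(M_\MO)_\MX}=\term{M_{\MX\MO}}$, so the two orders of reduction agree and parallel reduction is diamond.
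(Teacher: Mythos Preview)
Your approach mirrors the paper's: induction on the size of the term, with Lemma~\ref{lem:parallel-substitution} handling the case where the outermost redex carries the $\MX$-mark. The gap lies in the $\MO$-only case. You assert that after the $\MX$-phase ``a residual redex remains whose components have been $\MX$-reduced'', and justify this by noting that reduction introduces no new locations or free variables. That observation shows the side-conditions of a redex are preserved, but it does not establish the structural fact you actually need: that $\term{(H.\MOR a<x>.M)_\MX}$ factors as $\term{K.\MOR a<x>.N}$ for some head context $\term K$ and term $\term N$, with moreover $\term{(H.M)_\MX}=\term{K.N}$, so that your appeal to the induction hypothesis on $\term{H.M}$ is legitimate.

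The difficulty is that an $\MX$-marked redex may straddle the $\MO$-marked abstraction: its push $\term{[P]b\MXL}$ may lie in $\term H$ while its matching pop $\term{\MXR b<y>}$ lies in $\term M$ (the interleaved configuration of Section~\ref{sec:confluence}). Contracting it merges material from both sides of $\term{\MOR a<x>}$ through a substitution, and one must check that the abstraction survives in head position with the required correspondence to $\term{(H.M)_\MX}$. The paper discharges this with a separate inner induction on the size of $\term H$, treating the nested and interleaved $\MX$-redex configurations explicitly. Once you add that auxiliary induction, your argument is complete and coincides with the paper's.
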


\begin{proof}
By induction on the size of $\term M$.
\begin{itemize}
	\item 
Unit case: immediate by $\term{((*)_\MX)_\MO}=\term*=\term{(*)_{\MX\MO}}$.

	\item
Variable case: if $\term{(M_\MX)_\MO}=\term{M_{\MX\MO}}$ then
\[
	\term{((x.M)_\MX)_\MO}=\term{x.(M_\MX)_\MO}=\term{x.M_{\MX\MO}}=\term{(x.M)_{\MX\MO}}~.
\]

	\item
Unmarked application case: if $\term{(M_\MX)_\MO}=\term{M_{\MX\MO}}$ and $\term{(N_\MX)_\MO}=\term{N_{\MX\MO}}$ then
\[
\begin{array}{r@{}l}
	\term{(([N]a.M)_\MX)_\MO}
		&~=~\term{[(N_\MX)_\MO]a.(M_\MX)_\MO}
	\\	&~=~\term{[N_{\MX\MO}]a.M_{\MX\MO}}
	\\	&~=~\term{([N]a.M)_{\MX\MO}}~.
\end{array}
\]

	\item
Unmarked abstraction case: if $\term{(M_\MX)_\MO}=\term{M_{\MX\MO}}$ then
\[
	\term{((a<x>.M)_\MX)_\MO}=\term{a<x>.(M_\MX)_\MO}=\term{a<x>.M_{\MX\MO}}=\term{(a<x>.M)_{\MX\MO}}~.
\]

	\item
Doubly-marked redex case: if $\term{(H.M_\MX)_\MO}=\term{H.M_{\MX\MO}}$ and $\term{(N_\MX)_\MO}=\term{N_{\MX\MO}}$ then
\[
\begin{array}{r@{}l}
	\term{([N]a\MXL\MOL.H.\MOR\MXR a<x>.M)_\MX)_\MO}
	&~=~ \term{\{(N_\MX)_\MO/x\}((H.M)_\MX)_\MO}
\\	&~=~ \term{\{N_{\MX\MO}/x\}(H.M)_{\MX\MO}}
\\	&~=~ \term{([N]a\MXL\MOL.H.\MOR\MXR a<x>.M)_{\MX\MO}}~.
\end{array}
\]

	\item
First singly-marked redex case: if $\term{(H.M_\MX)_\MO}=\term{(H.M)_{\MX\MO}}$ and $\term{(N_\MX)_\MO}=\term{N_{\MX\MO}}$ then, using Lemma~\ref{lem:parallel-substitution},
\[
\begin{array}{r@{}l}
	\term{(([N]a\MXL.H.\MXR a<x>.M)_\MX)_\MO}
	&~=~ \term{(\{N_\MX/x\}(H.M)_\MX)_\MO}
\\	&~=~ \term{\{(N_\MX)_\MO/x\}((H.M)_\MX)_\MO}
\\	&~=~ \term{\{N_{\MX\MO}/x\}(H.M)_{\MX\MO}}
\\	&~=~ \term{([N]a\MXL.H.\MXR a<x>.M)_{\MX\MO}}~.
\end{array}
\]
	\item
Second singly-marked redex case: the redex considered is
\[
	\term{[N^{\MX\MO}]a\MOL.H^{\MXL\MOL}.\MOR a<x>.M^{\MXR\MOR}}
\]
where $\term{H}$ does not use the location $\term a$ or a free variable $\term x$. First, it is established that $\term{(H^\MOL.\MOR a<x>.M^\MOR)_\MX}$ is of the form $\term{K^\MOL.\MOR a<x>.N^\MOR}$ where $\term{(H^\MOL.M^\MOR)_\MX}=\term{K^\MOL.N^\MOR}$ for some head context $\term K$ and term $\term N$. For readability the marking $\term{\MO}$ will be suppressed, except on the abstraction $\term{\MOR a<x>}$. The proof is by induction on the size of $\term{H}$.
\begin{itemize}
	\item
Unit case: if $\term H=\term{\{\}}$ then let $\term{K}=\term{\{\}}$ and $\term{N}=\term{M_\MX}$, which gives the following, as required.
\[
\begin{array}{r@{}l@{\qquad}r@{}l}
	\term{(H.\MOR a<x>.M)_\MX} & ~=~ \term{(\MOR a<x>.M)_\MX}
  & \term{(H.M)_\MX}           & ~=~ \term{M_\MX}
\\ &~=~ \term{\MOR a<x>.M_\MX}  && ~=~ \term{N}
\\ &~=~ \term{K.\MOR a<x>.N}	&& ~=~ \term{K.N}~.
\end{array}
\]

	\item
Unmarked application case: if $\term{H}=\term{[P]b.H'}$ then the inductive hypothesis for $\term{H'}$ and $\term{M}$ gives $\term{K'}$ and $\term N$. Let $\term K=\term{[P]b.H'}$, which gives the following, as required.
\[
\begin{array}{l@{\qquad}l}
	\term{([P]b.H'.\MOR a<x>.M)_\MX}		& \term{([P]b.H'.M)_\MX}
\\ ~=~ \term{[P_\MX]b.(H'.\MOR a<x>.M)_\MX}	& ~=~ \term{[P_\MX]b.(H'.M)_\MX}
\\ ~=~ \term{[P_\MX]b.K'.\MOR a<x>.N}		& ~=~ \term{[P_\MX]b.K'.N}
\\ ~=~ \term{K.\MOR a<x>.N}					& ~=~ \term{K.N}~.
\end{array}
\]
	\item
Unmarked abstraction case: if $\term{H}=\term{b<y>.H'}$ then the inductive hypothesis for $\term{H'}$ and $\term{M}$ gives $\term{K'}$ and $\term N$. Let $\term K=\term{b<y>.H'}$, which gives the following, as required.
\[
\begin{array}{l@{\qquad}l}
	\term{(b<y>.H'.\MOR a<x>.M)_\MX}		& \term{(b<y>.H'.M)_\MX}
\\ ~=~ \term{b<y>.(H'.\MOR a<x>.M)_\MX}		& ~=~ \term{b<y>.(H'.M)_\MX}
\\ ~=~ \term{b<y>.K'.\MOR a<x>.N}			& ~=~ \term{b<y>.K'.N}
\\ ~=~ \term{K.\MOR a<x>.N}					& ~=~ \term{K.N}~.
\end{array}
\]

	\item
First marked redex case: if $\term{H}=\term{[P]b\MXL.H'.\MXR b<y>.`{H''}}$ then the inductive hypothesis for $\term{H'.`{H''}}$ and $\term M$ gives $\term{K'}$ and $\term{N'}$. Let $\term{K}=\term{\{P_\MX/x\}K'}$ and $\term{N}=\term{\{P_\MX/x\}N'}$, which gives the following, as required.
\[
\begin{array}{l}
	\term{([P]b\MXL.H'.\MXR b<y>.`{H''}.\MOR a<x>.M)_\MX}
\\  ~=~ \term{\{P_\MX/y\}(H'.`{H''}.\MOR a<x>.M)_\MX}    
\\  ~=~ \term{\{P_\MX/y\}(K'.\MOR a<x>.N')}				 
\\  ~=~ \term{(\{P_\MX/y\}K').\MOR a<x>.\{P_\MX/y\}N'}   
\\  ~=~ \term{K.\MOR a<x>.N}							 
\end{array}
\qquad
\begin{array}{l}
   \term{([P]b\MXL.H'.\MXR b<y>.`{H''}.M)_\MX}
\\ ~=~ \term{\{P_\MX/y\}(H'.`{H''}.M)_\MX}
\\ ~=~ \term{\{P_\MX/y\}(K'.N')}
\\ ~=~ \term{(\{P_\MX/y\}K').\{P_\MX/y\}N'}
\\ ~=~ \term{K.N}~.
\end{array}
\]

	\item
Second marked redex case: if $\term{H}=\term{[P]b\MXL.H'}$ and $\term{M}=\term{`{H''}.\MXR b<y>.M'}$ then the inductive hypothesis for $\term{H'}$ and $\term{`{H''}.M}$ gives $\term{K'}$ and $\term{N'}$. Let $\term{K}=\term{\{P_\MX/x\}K'}$ and $\term{N}=\term{\{P_\MX/x\}N'}$, which gives the following, as required.
\[
\begin{array}{l}
	\term{([P]b\MXL.H'.\MOR a<x>.`{H''}.\MXR b<y>.M)_\MX}
\\  ~=~ \term{\{P_\MX/x\}(H'.\MOR a<x>.`{H''}.M)_\MX}	 
\\  ~=~ \term{\{P_\MX/x\}(K'.\MOR a<x>.N')}              
\\  ~=~ \term{(\{P_\MX/x\}K').\MOR a<x>.\{P_\MX/x\}N')}  
\\  ~=~ \term{K.\MOR a<x>.N}                             
\end{array}
\qquad
\begin{array}{l}
	\term{([P]b\MXL.H'.`{H''}.\MXR b<y>.M)_\MX}
\\  ~=~ \term{\{P_\MX/x\}(H'.`{H''}.M)_\MX}
\\  ~=~ \term{\{P_\MX/x\}(K'.N')}
\\  ~=~ \term{(\{P_\MX/x\}K').\{P_\MX/x\}N')}
\\  ~=~ \term{K.N}~.
\end{array}
\]
\end{itemize}
This concludes the subproof. Returning to the main proof, let $\term{(H.\MOR a<x>.M)_\MX}=\term{K.\MOR a<x>.N}$ where $\term{(H.M)_\MX}=\term{K.N}$. The case concludes as follows.
\[
\begin{array}{r@{}ll}
	\term{(([N]a\MOL.H.\MOR a<x>.M)_\MX)_\MO}
	&~=~ \term{([N_\MX]a\MOL.(H.\MOR a<x>.M)_\MX)_\MO}
\\	&~=~ \term{([N_\MX]a\MOL.K.\MOR a<x>.N)_\MO}
\\	&~=~ \term{\{(N_\MX)_\MO/x\}(K.N)_\MO}
\\	&~=~ \term{\{N_{\MX\MO}/x\}(H.M)_{\MX\MO}}
\\	&~=~ \term{([N]a\MOL.H.\MOR a<x>.M)_{\MX\MO}}
\end{array}
\]
\end{itemize}
\end{proof}

\begin{lemma}
\label{lem:parallel diamond}
Parallel reduction is diamond: if 
$
	\term N\rwlp\term M \rwp\term P
$
then
$
	\term N\rwp\term Q\rwlp\term P
$
for some $\term Q$.
\end{lemma}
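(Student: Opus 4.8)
The plan is to realise $Q$ as the reduct of $\term M$ along \emph{both} markings at once, and to complete each side of the peak to this common target using the commutation already proved. Unfolding the definition of a parallel step, the hypothesis $\term N\rwlp\term M\rwp\term P$ means that $\term N=\term{M_\MX}=\term{(M^\MX)_\MX}$ and $\term P=\term{M_\MO}=\term{(M^\MO)_\MO}$ for two markings $\term\MX$ and $\term\MO$ of the redexes of $\term M$. Combining them into the single marking $\term{\MX\MO}$ of $\term M$, I would take $Q=\term{M_{\MX\MO}}$.

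First I would verify $\term N\rwp Q$. Mark $\term M$ by both selections simultaneously to form $\term{M^{\MX\MO}}$ and reduce the $\term\MX$-marking. By the observation recorded just before Lemma~\ref{lem:marked-reduction-composes}---that reducing along one marking preserves the other, here in the form $\term{(M^{\MX\MO})_\MX}=\term{(M_\MX)^\MO}$---the result is $\term N$ equipped with the residual $\term\MO$-marking. Reducing $\term N$ along that marking is then a legitimate parallel step $\term N\rwp\term{(M_\MX)_\MO}$, and by Lemma~\ref{lem:marked-reduction-composes} its target is $\term{(M_\MX)_\MO}=\term{M_{\MX\MO}}=Q$. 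The argument for $\term P\rwp Q$ is identical with $\term\MX$ and $\term\MO$ interchanged, giving $\term P=\term{M_\MO}\rwp\term{(M_\MO)_\MX}=\term{M_{\MX\MO}}=Q$, again by Lemma~\ref{lem:marked-reduction-composes}. Together these yield $\term N\rwp Q\rwlp\term P$, as required.

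The genuine mathematical content already sits in the earlier lemmas: Lemmas~\ref{lem:parallel-composition} and~\ref{lem:parallel-substitution} feed into Lemma~\ref{lem:marked-reduction-composes}, whose delicate cases are exactly the FMC-specific overlaps where an application and its matching abstraction are separated by a head context that a second reduction may substitute into or duplicate. With those in hand the diamond property is essentially a corollary, so I expect the only obstacle here to be presentational rather than computational: making precise that reducing one marking leaves the other as a bona fide marking of the reduct---i.e.\ justifying $\term{(M^{\MX\MO})_\MX}=\term{(M_\MX)^\MO}$ (and its symmetric form)---so that both legs of the peak literally arise as parallel steps and Lemma~\ref{lem:marked-reduction-composes} applies verbatim.
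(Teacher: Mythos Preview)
Your proposal is correct and follows essentially the same argument as the paper: take $Q=\term{M_{\MX\MO}}$, use the observation $\term{(M^{\MX\MO})_\MX}=\term{(M_\MX)^\MO}$ to exhibit each leg as a parallel step to the combined reduct, and invoke Lemma~\ref{lem:marked-reduction-composes} for the equality $\term{(M_\MX)_\MO}=\term{M_{\MX\MO}}=\term{(M_\MO)_\MX}$. Your closing remark that the substantive work lies in Lemma~\ref{lem:marked-reduction-composes} is accurate.
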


\begin{proof}
Let $\term{M^\MX}\rwp\term{M_\MX}=\term N$ and $\term{M^\MO}\rwp\term{M_\MO}=\term P$ for two separate markings $\term{\MX}$ and $\term{\MO}$ of $\term{M}$. Then with both markings on $\term M$ the peak becomes
\[
	\term{(M_\MX)^\MO}\rwlp\term{M^{\MX\MO}}\rwp\term{(M_\MO)^\MX}~.
\]
Let $\term{Q}=\term{M_{\MX\MO}}$ so that by Lemma~\ref{lem:marked-reduction-composes} the peak converges as
\[
	\term{(M_\MX)^\MO}\rwp\term{M_{\MX\MO}}\rwlp\term{(M_\MO)^\MX}~.
\]
\end{proof}

\noindent\textbf{Theorem~\ref{thm:confluence} (restatement)}\quad
Reduction $\rw$ is confluent.

\begin{proof}
A reduction step $\term M\rw\term N$ is a parallel step $\term{M^\MX}\rwp\term{M_\MX}=\term N$ by marking only the reduced redex. A peak $\term M\rwls\term N\rws\term P$ in regular reduction is then immediately one in parallel reduction, $\term{M}\rwlps\term N\rwps\term P$. By the diamond property, Lemma~\ref{lem:parallel diamond}, this converges with parallel reductions, $\term M\rwps\term Q\rwlps\term P$. By Proposition~\ref{prop:parallel correct} a parallel reduction $(\rwps)$ is a regular reduction $(\rws)$, so that the peak converges as $\term M\rws\term Q\rwls\term P$.
\end{proof}

\end{document}
